\documentclass[12pt]{article}
\usepackage[margin=1in]{geometry}
\usepackage{subcaption}
\usepackage{amsmath}
\usepackage{amssymb}
\usepackage{amsthm}
\usepackage{graphicx}
\usepackage{booktabs}
\usepackage{float}
\usepackage{placeins}
\usepackage[T1]{fontenc}
\usepackage{lmodern}
\usepackage{hyperref}
\usepackage{xurl}
\allowdisplaybreaks[4]
\newcommand{\NN}{\mathcal{N}}
\newcommand{\LL}{\mathcal{L}}
\newcommand{\card}[1]{\lvert #1\rvert}
\newcommand{\dbar}{\bar{d}}
\newcommand{\Fbar}{\bar{F}}
\newcommand{\Cplan}{\mathcal{C}^{\mathrm{plan}}}
\newcommand{\Cccm}{\mathcal{C}^{\mathrm{CCM}}}
\newtheorem{lemma}{Lemma}
\newtheorem{proposition}{Proposition}
\newtheorem{corollary}{Corollary}
\newtheorem{assumption}{Assumption}
\makeatletter
\renewenvironment{abstract}
  {\if@twocolumn
     \section*{\abstractname}%
   \else
     \small
     \begin{center}{\bfseries \abstractname}\end{center}%
     \list{} {\leftmargin=0pt \rightmargin=0pt}%
     \item\relax
   \fi}
  {\if@twocolumn\else\endlist\fi}
\makeatother

\begin{document}
\sloppy

\title{The Potential Welfare Gains from Curtailment Trading Under Non-Firm Interconnection}

\author{{\fontsize{11}{13}\selectfont Richard Mahuze, Charlotte Gressel, Ali Amadeh and K. Max Zhang}%
\thanks{The authors are with the Sibley School of Mechanical and Aerospace Engineering, Cornell University, Ithaca, NY, 14853, USA. Corresponding author: K. Max Zhang (kz33@cornell.edu).}}

\date{}
\maketitle
\vspace{-0.8em}

\begin{abstract}
Rapid growth of large loads, especially data centers, is straining grid capacity and increasing interest in non-firm interconnection agreements that exchange faster grid access for curtailment exposure. This shift creates opportunities for differentiated reliability, where curtailment is allocated according to the value consumers place on uninterrupted service. This value is often expressed through the value of lost load (VOLL), an estimate of the cost a consumer bears for unserved energy. Because VOLL differs by more than a hundredfold across customer classes, pro-rata allocation, which cuts every load by the same proportion, ignores variation that could be leveraged to improve grid utilization. This paper introduces the network-constrained Curtailment Credit Market (CCM), a mechanism that lets one curtailable load pay another to take on part of its curtailment obligation. In this market, a high-VOLL load can reduce its own interruption by paying a lower-VOLL load to absorb additional curtailment. Crucially, the CCM clears while enforcing transmission limits. We prove that the CCM can implement every curtailment pattern available to an idealized planner that knows each load's VOLL and assigns curtailment directly. If agents report true lost-load values, CCM clearing attains the planner's total value of served load, the highest value achievable under network constraints. We evaluate the CCM on three test networks: a 3-bus network, the IEEE 24-bus network, and a reduced New York grid spanning multiple load zones. Across these networks, the CCM raises the total value of served load by 1.41 to 1.83 times relative to pro-rata curtailment.
\end{abstract}

\noindent\textbf{Keywords:} curtailment allocation, mechanism design, value of lost load, differentiated reliability

\section{Introduction}

Demand from data centers and other large loads is rising faster than utilities can build the new transmission and generation to serve it. A growing number accept non-firm interconnection agreements (or are required by tariff to take them), which grant quicker access in exchange for mandatory curtailment whenever the grid cannot reliably serve them, whether from a generation shortfall or network congestion~\cite{kahrl_speed_2026, esig_large_loads_2026}. Every load that connects this way enlarges the pool the operator must shed first when capacity runs short. Loads value uninterrupted service very differently, and the value of lost load (VOLL) measures how much each one would lose. Because these values vary enormously across loads, they should guide decisions about which loads give way and by how much. A survey of the ERCOT region puts the system-wide average VOLL near \$35,000/MWh. Yet the values that loads place on uninterrupted service span a wide range, from \$4,000/MWh for residential customers to \$667,000/MWh for small commercial and industrial loads, a $167\times$ spread~\cite{charles_gibbons_value_2024}. Nationally, the same interruption-cost surveys find comparable heterogeneity across U.S.\ service territories~\cite{larsen_ice_2025}. Pro-rata allocation, which cuts every load by the same proportion, ignores this spread~\cite{mehrtash_reserve_2023} and destroys surplus that a price-based mechanism could recover.

This paper introduces the network-constrained Curtailment Credit Market (CCM), a market that lets one curtailable load pay another to take on part of its curtailment obligation. In the CCM, a high-VOLL load can pay a lower-VOLL load to absorb additional curtailment, while the market operator clears these trades subject to transmission limits. The mechanism therefore converts curtailment from an administratively assigned burden into a network-feasible tradeable obligation. The central question is whether such bilateral trades can reproduce the allocations that an ideal planner would choose if it knew each load's private curtailment value.

Markets allocate resources efficiently when participants can reveal their valuations through prices and trade freely~\cite{mas-colell_microeconomic_1995}. This motivates a similar requirement in curtailment allocation, where loads should be able to express their values for served load and exchange curtailment obligations. However, current blunt rules prevent both. When the system reaches firm load shedding, the operator assigns curtailment by fixed rules that give agents' private valuations no role~\cite{mehrtash_reserve_2023, monterde_non-firm_2025}. Even upstream of emergencies, scarcity pricing instruments can compress the signal. For example, the Electric Reliability Council of Texas (ERCOT) Operating Reserve Demand Curve calculates real-time reserve price adders using an administrative value of lost load (VOLL) parameter set equal to the system-wide offer cap, currently \$5,000/MWh, rather than using load-specific interruption values~\cite{ercot_2024_2024}. This single parameter was a reasonable design choice for a market where most loads are price-insensitive, but it does not distinguish among the range of interruption costs that heterogeneous large loads now present. The result is a system that compresses heterogeneity in the pricing instrument and then disregards it entirely in the allocation. A mechanism that lets agents reveal private curtailment costs and trade obligations subject to network feasibility constraints would recover the surplus that this process leaves unrealized.

If all loads participated in real-time wholesale markets and scarcity pricing fully reflected willingness to pay, locational marginal prices would internalize curtailment values, and those prices alone would sort curtailment efficiently across loads~\cite{schweppe_spot_1988}. Three features of current operations prevent this outcome. First, price caps truncate bids above a regulatory ceiling, so loads whose interruption costs exceed that ceiling cannot express their true willingness to pay. Second, most retail customers purchase electricity at regulated rates that do not vary with wholesale conditions, which insulates them from scarcity signals entirely. Third, when reserves fall below critical thresholds, operators override market prices with non-price emergency actions (voltage reductions, public conservation appeals, and firm load shedding) precisely when the heterogeneity in interruption costs is most consequential~\cite{mehrtash_reserve_2023}. Where loads do participate in wholesale markets, they enter through operator-dispatched demand response programs such as Pennsylvania-New Jersey-Maryland (PJM) Interconnection's Emergency Load Response Program, New York Independent System Operator's Special Case Resources, and ERCOT's Emergency Response Service~\cite{ferc_dr_assessment_2024}. These programs treat curtailment as a supply-side substitute that the operator activates on behalf of the system, not as an obligation that loads trade among themselves.

Even where loads can bid into energy markets as demand response, the compensation rule itself creates a distortion. Federal Energy Regulatory Commission (FERC) Order 745 requires system operators to pay curtailment at the full locational marginal price (LMP)~\cite{federal_energy_regulatory_commission_demand_2011}. Chao and DePillis~\cite{chao_incentive_2013} show that this rule can induce baseline inflation and excessive curtailment whose value falls below the cost of forgone energy, with the excess cost recovered through socialized charges to other market participants or ratepayers. A lateral settlement avoids this problem. If the load seeking curtailment relief pays the flexible load that absorbs the obligation, the operator makes no payment and creates no uplift. Because the traded object is a curtailment obligation rather than an estimated counterfactual baseline, the inflation mechanism that Chao and DePillis identify does not arise. A complementary curtailment market can therefore solve the residual allocation problem by letting loads trade curtailment directly, without requiring the operators or system-wide ratepayers to fund the transfer.

Prior work on differentiated reliability does not provide a clearing mechanism through which loads trade curtailment obligations subject to transmission-feasibility constraints. Chao and Wilson~\cite{chao_priority_1987} showed that priority service, in which loads self-select into curtailment tiers that reflect their willingness to pay, Pareto-dominates random rationing (no consumer is worse off, and some are better off). They proposed a market for tradeable ``priority points'' as one implementation path. Their framework assumes a single aggregate supply node with no transmission network. Chao, Oren, and Wilson~\cite{chao_priority_2022} later revisited priority pricing for power systems with high renewable penetration under uncertainty, but the mechanism still operates without explicit network-feasibility enforcement. Deng and Oren~\cite{deng_priority_2001} extended priority pricing to zonal network access through a priority-differentiated transmission tariff, in which traders self-select a strike price that sets both their scheduling priority and their compensation if curtailed. No operator has adopted that self-selecting mechanism in operational curtailment allocation. Mou et al.~\cite{mou_bilevel_2020} formulate priority-service pricing as a bilevel program that they solve as a single-level mixed-integer linear program, but their design still sets a producer's menu of price-reliability options rather than clearing trades among loads on a network. Interruptible-service tariffs and firm-versus-non-firm transmission rights offer coarse, administratively priced reliability differentiation, but neither elicits granular willingness-to-pay information nor prices priority positions to induce efficient self-selection. Billimoria et al.~\cite{billimoria_insurance_2022} operationalized the Chao--Wilson insurance concept in a detailed power-system model and found that priority curtailment of low-VOLL loads reduces both the quantity and duration of lost load. Their design sets a fixed menu of contracts through a central insurer rather than letting loads trade laterally.

The market-design literature addresses a related network-feasibility problem for transmission usage and congestion-risk rights, not for demand-side curtailment obligations. Hogan~\cite{hogan_contract_1992} proposed financial transmission rights (FTRs), which pay their holder the congestion charge between two points on the network and so hedge the cost of moving power across a congested grid. He then showed that if the awarded rights satisfy a simultaneous feasibility test on the transmission network, congestion rents under nodal pricing are sufficient to fund all FTR payments. Chao and Peck~\cite{chao_market_1996} designed a market mechanism for transmission access based on tradeable rights to use individual transmission lines, with a trading rule that specifies how each energy transaction loads each line based on network constraints. In practice, FTR markets, the operational form of Hogan's proposal, have drawn substantial criticism on grounds of revenue adequacy, strategic behavior and investment incentives. Both presuppose the wholesale energy market, with FTRs layered on top of it as a financial congestion hedge and Chao--Peck rights organizing transmission access alongside energy trading. Neither approach addresses demand-side curtailment obligations, where the traded quantity is a physical reduction in consumption and the clearing mechanism must elicit private curtailment costs that the system operator does not observe.

Broader surveys of demand-side flexibility mechanisms~\cite{abedrabboh_applications_2023, cano-martinez_scoping_2025} document programs ranging from interruptible tariffs to aggregator-dispatched virtual power plants, yet these programs overwhelmingly assume that the operator already knows each participant's cost function rather than eliciting it through bids. Recent policies formalize curtailment obligations for large loads but allocate them administratively, by fixed quantities or eligibility rules that take no account of loads' heterogeneous VOLL and leave no mechanism for loads to reallocate the burden among themselves. For example, Texas Senate Bill 6 directs ERCOT to impose curtailment on non-critical large loads once market services are exhausted~\cite{mcguirewoods2025sb6}, and Southwest Power Pool's proposed Conditional High Impact Large Load Service (CHILLS) offers curtailable transmission service for loads that lack firm capacity~\cite{spp_chills_2026}. Neither of these frameworks provides a market through which participants can reallocate those obligations after they are assigned. The BiTraDER project in the United Kingdom represents a close operational precursor, piloting peer-to-peer trading of curtailment obligations among distributed energy resources on a distribution network~\cite{electricity_north_west_bitrader_2022}. BiTraDER allows participants to trade positions in the curtailment merit-order stack through bilateral exchanges, but the trades are mediated by the distribution network operator on a single constraint group and do not clear against a system-wide transmission-feasibility test, which makes it closer to operator-mediated, supply-side flexibility than to a demand-side market that loads clear among themselves. On the incentive-design side, the Vickrey--Clarke--Groves (VCG) mechanism has been applied to generation-side electricity markets. Xu and Low~\cite{xu_efficient_2017} showed that VCG achieves dominant-strategy incentive compatibility in wholesale energy markets, Sessa et al.~\cite{sessa_exploring_2017} extended VCG to reserve procurement and characterized conditions under which collusion is unprofitable, and Exizidis et al.~\cite{exizidis_incentive_2019} applied VCG to a two-stage stochastic market with high wind penetration. All three treat generators or supply-side resources as the strategic agents; none addresses demand-side curtailment-obligation trading subject to network constraints. Therefore, no existing mechanism combines these two elements: (1) a clearing process that elicits private curtailment valuations through bids, and (2) the enforcement of transmission-feasibility constraints that determine whether a proposed reallocation can be physically implemented.

The CCM is a mechanism in which agents with mandatory curtailment obligations, determined exogenously by a price-capped energy market, submit VOLL bids that the operator clears into bilateral credit flows, subject to DC power-flow feasibility constraints. Our main structural result shows that the bilateral credit flow representation can implement every curtailment allocation available to a centralized planner ($\Cccm = \Cplan$). A centralized omniscient planner assigns curtailment levels directly, whereas the market must realize them through non-negative trades between specific pairs of agents, an additional requirement that could, a priori, render some efficient allocations unreachable. This equivalence holds because transmission constraints depend only on each agent's final curtailment level, not on the trade path used to reach it; therefore, any planner-feasible allocation can be decomposed into bilateral credit flows that satisfy both network limits and non-negativity. Consequently, under truthful bidding, the CCM selects the same curtailment allocation and attains the same social welfare, the total value of served load, $W = \sum_i v_i(L_i - c_i)$, as an omniscient planner without requiring direct knowledge of private curtailment costs, because agents reveal these costs through bids. As an incentive-compatibility benchmark, we show that VCG payments make truthful bidding a dominant strategy, thereby bounding the design space for practical payment rules. The welfare equivalence here is about the allocation rather than the financing. VCG can implement truthful revelation, yet it may still need an external subsidy when the mechanism's net revenue is negative. The analysis is static, adopts the DC power-flow approximation, and is validated on three test networks: a 3-bus network for transparent inspection of credit flows, an IEEE 24-bus network for topological realism, and a reduced New York grid with five zonal-proxy agents for inter-area aggregation. Across these networks, the CCM improves social welfare by $1.41\times$ to $1.83\times$ relative to pro-rata curtailment. The specific contributions are:

\begin{enumerate}
\item \textbf{Feasible-set equivalence.} Trading introduces no loss of allocative capability. The set of curtailment allocations reachable through bilateral credit flows coincides exactly with the central planner's feasible set under the direct-current power-transfer-distribution-factor network model, so a market can replace administrative rationing without giving up any outcome the planner could reach.
\item \textbf{Planner-welfare equivalence and incentive compatibility.} Under truthful bidding the CCM attains the planner's benchmark welfare, and VCG payments make truthful bidding a dominant strategy. VCG serves as an incentive benchmark rather than a budget-balanced settlement rule, so the welfare equivalence concerns the allocation rather than the way the mechanism is financed.
\item \textbf{Exact MILP reformulation.} The strategic-bidding problem with CCM clearing as the lower level admits a single-level mixed-integer linear program. We solve it in 9 to 34~ms on the test networks with the solver and hardware reported in the computational-experiments section, excluding VCG leave-one-out payment solves and sensitivity analyses, so these proof-of-concept clearing instances solve quickly at the scales studied.
\end{enumerate}

This paper is organized as follows. Section~\ref{sec:prelim} formalizes the model, establishes key properties, and derives the MILP reformulation. Section~\ref{sec:numerical} introduces the test networks and benchmark regimes. Section~\ref{sec:results} presents results and discussion. Section~\ref{sec:conclusion} concludes.

\section{Market Setup and Properties}\label{sec:prelim}\label{sec:properties}

The CCM activates after the energy market clears and the ISO
declares a system-wide curtailment target~$D$~(MW).  At this point,
the bus-line topology, power transfer distribution factor (PTDF) matrix~$\Phi$, directional line
limits~$\Fbar_\ell^+$ and $\Fbar_\ell^-$, agent locations~$n(i)$, pre-curtailment
loads~$L_i$, and exogenous curtailment obligations~$\dbar_i$ are all
fixed inputs. The CCM's task is to reallocate the obligation profile subject to network feasibility. Note that the upstream process by which the profile was set is out of scope of this paper. Each entry $\Phi_{\ell,n}$ gives the fraction of a one-MW injection at bus~$n$ that flows on line~$\ell$. We write $\Phi_{\ell,n(i)}$ when referring to the PTDF coefficient at agent~$i$'s bus~$n(i)$. The matrix translates bus-level curtailment changes into line-flow changes. The market-clearing problem therefore takes the network and the obligation profile as given and optimizes only the reallocation of curtailment through bilateral credit trades. A large load participates because it can pay a lower-cost agent to absorb part of its curtailment obligation, reducing its own interruption losses. Figure~\ref{fig:ccm-overview} illustrates this clearing process.

\begin{figure}[!ht]
    \centering
    \includegraphics[width=0.9\linewidth]{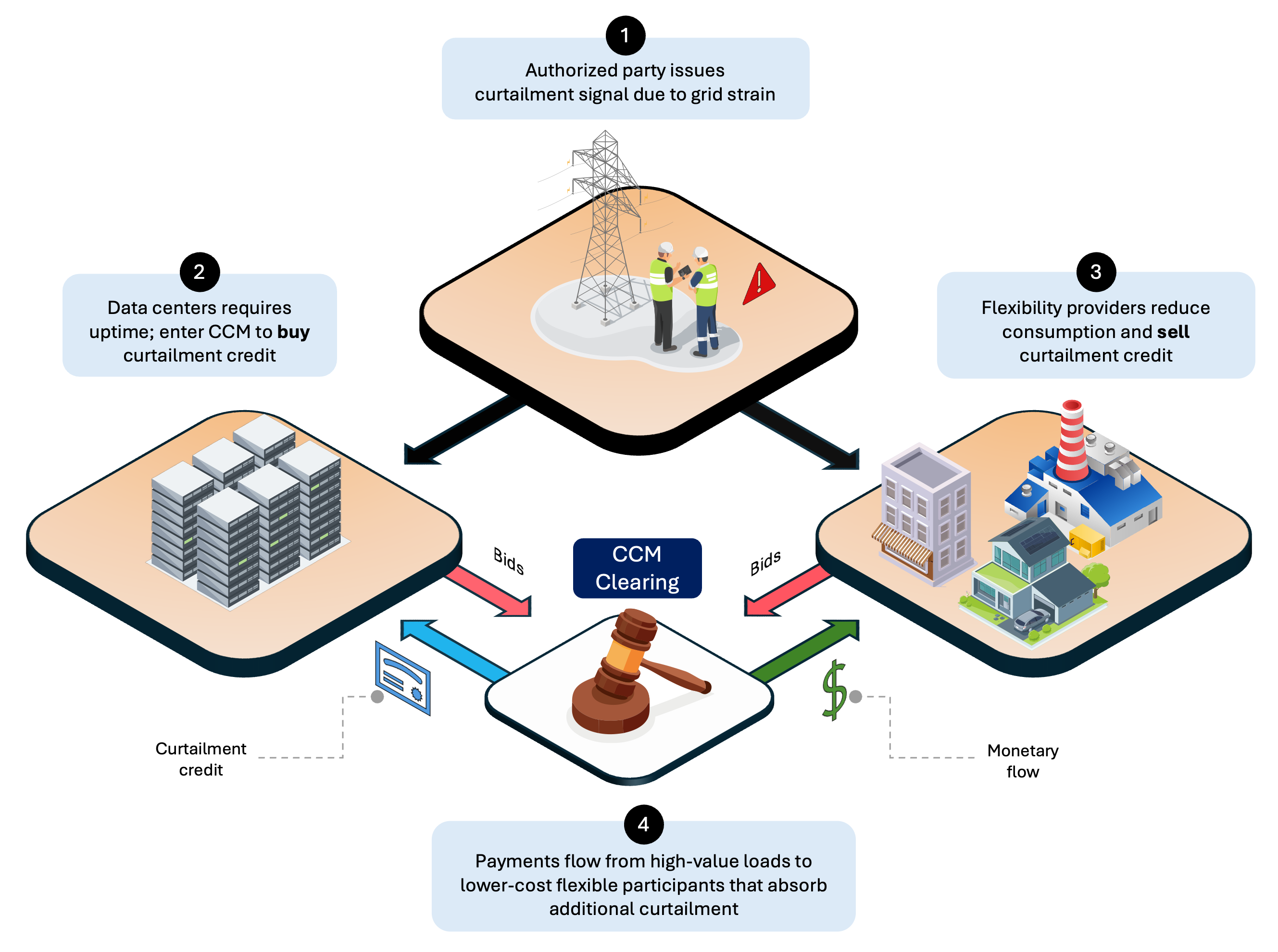}
    \caption{Overview of the Curtailment Credit Market (CCM). Agents with exogenous curtailment obligations submit
  bids; the operator clears bilateral credit flows subject to transmission network feasibility, reallocating
  curtailment from high-VOLL to low-VOLL agents.}
    \label{fig:ccm-overview}
  \end{figure}

\subsection{Problem Setup}

\subsubsection{CCM Clearing Formulation}\label{sec:ccm-formulation}

Let $\mathcal{N} = \{1, \dots, N\}$ denote the set of agents and $\mathcal{L}$ the set of
transmission branches. A bilateral credit flow $x_{ij} \ge 0$ moves curtailment
obligation from agent~$j$ to agent~$i$, where agent~$j$ purchases
$x_{ij}$~MW of curtailment relief, and agent~$i$ accepts
$x_{ij}$~MW of additional curtailment.  By convention, $x_{ii} = 0$ for all~$i$. All summations written $\sum_{j \ne i}$ range over $\NN \setminus \{i\}$, and all summations written $\sum_j$ range over the full set $\NN$ unless otherwise specified.

Credit flows determine each agent's realized curtailment and served
load.  Agent~$i$'s realized curtailment is
\begin{equation}\label{eq:curtailment}
  c_i \;=\; \dbar_i
    - \sum_{j \ne i} x_{ji}
    + \sum_{j \ne i} x_{ij},
\end{equation}
and its served load is $s_i = L_i - c_i$.  The net injection change
at bus~$n$ is
$\Delta P_n = \sum_{i:\,n(i)=n}(c_i - \dbar_i)$.

Given declared bids $b = (b_1, \dots, b_N)$, the CCM operator
solves the following linear program:
\begin{subequations}\label{eq:ccm}
\begin{alignat}{3}
  &\max_{x,\,c,\,s}
    &&\quad \sum_{i \in \NN} b_i \, s_i
    \label{eq:ccm:obj}\\[3pt]
  &\text{s.t.}
    &&\quad c_i \;=\; \dbar_i - \!\!\sum_{j \ne i} x_{ji}
        + \!\!\sum_{j \ne i} x_{ij}
    &&\quad \forall\, i \in \NN,
    \label{eq:ccm:curtbal}\\
  & &&\quad s_i \;=\; L_i - c_i
    &&\quad \forall\, i \in \NN,
    \label{eq:ccm:served}\\
  & &&\quad \sum_{i \in \NN} c_i \;=\; D,
    \label{eq:ccm:totbal}\\
  & &&\quad 0 \;\le\; c_i \;\le\; L_i
    &&\quad \forall\, i \in \NN,
    \label{eq:ccm:agbnd}\\
  & &&\quad -\Fbar_\ell^{-} \;\le\;
        \sum_{n} \Phi_{\ell n}\,\Delta P_n\bigl(c(x)\bigr)
        \;\le\; \Fbar_\ell^{+}
    &&\quad \forall\, \ell \in \LL,
    \label{eq:ccm:ptdf}\\
  & &&\quad x_{ij} \;\ge\; 0
    &&\quad \forall\, i \ne j.
    \label{eq:ccm:nonneg}
\end{alignat}
\end{subequations}
The operator maximizes bid-weighted served load subject to network
feasibility, using credit flows as the allocation instrument. The upper bound $c_i = L_i$ permits full disconnection; equivalently, each agent
can absorb up to $L_i - \bar{d}_i$~MW of additional curtailment beyond its initial
obligation. This assumes that every agent is fully curtailable with a marginal
interruption cost~$v_i$ that does not vary with curtailment depth,
and that sellers have the operational authority to deliver the
additional curtailment they accept through credit purchases.

\subsubsection{Planner Benchmark and Feasible Sets}\label{sec:feasible-sets}

To evaluate the CCM, we compare it with an omniscient planner who
knows each agent's true value of lost load,
$v_i > 0$~(\$/MWh), and assigns curtailment directly without
credit flows. The planner solves
\begin{subequations}\label{eq:planner}
\begin{align}
W_{\mathrm{plan}} \;=&\; \max_{c}\; \sum_{i \in \NN} v_i\,(L_i - c_i)
\label{eq:planner:obj}\\
\text{s.t.}\; &\; \sum_{i} c_i = D,
\label{eq:planner:bal}\\
&\; 0 \le c_i \le L_i, \quad \forall\, i \in \NN,
\label{eq:planner:bnd}\\
&\; -\Fbar_\ell^{-} \;\le\;
\sum_{n} \Phi_{\ell n}\,\Delta P_n
\;\le\; \Fbar_\ell^{+}
\quad \forall\,\ell \in \LL.
\label{eq:planner:flow}
\end{align}
\end{subequations}

The two optimization problems operate over different but related
constraint sets. The \emph{planner feasible set} collects all
curtailment vectors that satisfy system balance, agent bounds, and
PTDF flow limits:
\begin{equation}\label{eq:Cplan}
\begin{aligned}
\Cplan = \Bigl\{c \in \mathbb{R}^N :\;&
\sum_{i} c_i = D,\;
0 \le c_i \le L_i\;\forall\, i,\\
&-\Fbar_\ell^{-} \le \textstyle\sum_n \Phi_{\ell n}\,\Delta P_n(c)
\le \Fbar_\ell^{+}\;\forall\, \ell
\Bigr\}.
\end{aligned}
\end{equation}
The \emph{CCM feasible set} is the projection of the full CCM
constraint set onto the curtailment vector~$c$:
\begin{equation}\label{eq:Cccm}
\begin{aligned}
\Cccm = \Bigl\{c \in \mathbb{R}^N :\\
\exists\, x \ge 0\;\text{s.t.}\;\\
c_i = \dbar_i - \sum_{j} x_{ji} + \sum_{j} x_{ij},\; \forall\, i,\\
\sum_i c_i = D,\; 0 \le c_i \le L_i\;\forall\, i,\\
-\Fbar_\ell^{-} \le \textstyle\sum_n \Phi_{\ell n}\,\Delta P_n(c)
\le \Fbar_\ell^{+}\;\forall\, \ell
\Bigr\}.
\end{aligned}
\end{equation}
Both sets impose the same constraints on~$c$, namely system balance,
agent bounds, and PTDF flow limits. However, $\Cccm$ additionally
requires the existence of non-negative credit flows that
produce~$c$ via~\eqref{eq:ccm:curtbal}. This existential
requirement could, a priori, make $\Cccm$ strictly smaller
than~$\Cplan$. Whether it does is the central structural question of this paper.

\subsection{Market Properties}

This section establishes three analytical results.  First, the CCM clearing
LP is well-posed and admits a compact reduced form
(Section~\ref{sec:wellposed}).  Second, the CCM and planner
feasible sets coincide, so the bilateral credit flow representation introduces no loss of allocative capability (Section~\ref{sec:equivalence}).
Third, the bilevel CCM can be reformulated as a single-level MILP
(Section~\ref{sec:milp}).

\subsubsection{Well-Posedness and Reduced Representation}%
\label{sec:wellposed}

\begin{lemma}[Well-posedness of CCM clearing]\label{lem:wellposed}
For any CCM instance satisfying
$\sum_{i \in \NN} \dbar_i = D$ and
$0 \le \dbar_i \le L_i$ for all~$i \in \NN$, and for any bid
vector $b \ge 0$, the CCM clearing
problem~\eqref{eq:ccm} is feasible and has finite optimal value. Consequently, it admits an
optimal solution.
\end{lemma}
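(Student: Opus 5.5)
The plan is to show feasibility by exhibiting an explicit feasible point, and finiteness by bounding the objective over the feasible region. The optimum is then attained by the fundamental theorem of linear programming: a feasible LP that is bounded above admits an optimal solution.

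\emph{Feasibility.} Take $x_{ij}=0$ for all $i\ne j$. Constraint~\eqref{eq:ccm:curtbal} then gives $c_i=\dbar_i$, and \eqref{eq:ccm:served} defines $s_i=L_i-\dbar_i$. The hypothesis $\sum_i\dbar_i=D$ gives \eqref{eq:ccm:totbal}, and the hypothesis $0\le\dbar_i\le L_i$ gives \eqref{eq:ccm:agbnd}. Nonnegativity \eqref{eq:ccm:nonneg} holds trivially.

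The one point needing care is the flow constraint \eqref{eq:ccm:ptdf}. At $c=\dbar$ we have $\Delta P_n(c)=\sum_{i:n(i)=n}(c_i-\dbar_i)=0$ for every bus $n$. Hence $\sum_n\Phi_{\ell n}\Delta P_n=0$ on every line. This lies in $[-\Fbar_\ell^-,\Fbar_\ell^+]$ provided the directional limits are nonnegative. That is implicit in their being line limits, and I would state it explicitly as a standing convention. This is the step I expect to be the main obstacle: it is the only place where an unstated modelling assumption ($\Fbar_\ell^\pm\ge 0$) is needed. It is natural, since the pre-trade dispatch is by construction the reference point of the incremental flows.

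\emph{Boundedness.} For any feasible point, \eqref{eq:ccm:served} and \eqref{eq:ccm:agbnd} give $0\le s_i\le L_i$. Since $b\ge0$, the objective satisfies
\[
0\le\sum_i b_i s_i\le\sum_i b_i L_i<\infty .
\]
The variables $x$ are themselves unbounded: circulating credit, such as $x_{ij}=x_{ji}=t$, leaves $c$ unchanged. This does not matter, because the objective depends on $x$ only through $c$, and $c$ ranges over a bounded set.

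\emph{Conclusion.} The feasible region is a nonempty polyhedron. The objective is linear and bounded above on it. By the fundamental theorem of linear programming (or equivalently LP duality), the supremum is finite and attained at some $(x^*,c^*,s^*)$. This gives existence of an optimal solution.

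Alternatively, one can project onto $c$, where the set is compact, apply Weierstrass to obtain an optimal $c^*$, and lift back to $x=0$-based flows. That route, however, requires the decomposition result proved later. The direct LP argument avoids this dependence.
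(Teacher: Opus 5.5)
Your proof is correct and matches the paper's own argument essentially step for step: the no-trade point $(x,c,s)=(0,\dbar,L-\dbar)$ gives feasibility, the bound $\sum_i b_i s_i \le \sum_i b_i L_i$ follows from $0\le s_i\le L_i$ and $b\ge 0$, and a linear objective bounded above on a nonempty polyhedron attains its maximum. Flagging that the PTDF step requires $\Fbar_\ell^{\pm}\ge 0$ is a fair tightening of an assumption the paper leaves implicit; separately, your aside is slightly off, since the projection route would not need Proposition~\ref{prop:feasible}: a projection of a polyhedron is a closed polyhedron, and every point of it lifts to some witnessing $x$ by definition.
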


\begin{proof}[Proof sketch]
The no-trade allocation $(x,c,s)=(0,\bar d,L-\bar d)$ is
feasible, and $\sum_i b_i s_i \le \sum_i b_i L_i$ bounds the objective from
above. The feasible set can be unbounded in the accounting variables $x$ because
zero-net credit cycles leave $c$, $s$, and all PTDF flows unchanged. This
unboundedness does not affect existence of an optimum because the feasible set is a
nonempty closed polyhedron and the LP has finite optimal value. The full
verification is in Appendix~\ref{app:proofs}.
\end{proof}

\begin{lemma}[Reduced LP]\label{lem:reduced}
Define $c_i(x) = \dbar_i - \sum_{j \ne i} x_{ji}
+ \sum_{j \ne i} x_{ij}$ and
$s_i(x) = L_i - c_i(x)$ for each $i \in \NN$.
Substituting into~\eqref{eq:ccm}, the CCM clearing problem is
equivalent to the reduced LP in the single variable
family $x = (x_{ij})_{i \ne j}$:
\begin{subequations}\label{eq:reduced}
\begin{alignat}{3}
  &\max_{x}\;
    &&\sum_{i \in \NN} b_i \bigl(L_i - c_i(x)\bigr)
    \label{eq:reduced:obj}\\
  &\text{s.t.}
    &&\quad c_i(x) \;\ge\; 0
    &&\quad \forall\, i \in \NN,
    \label{eq:reduced:lb}
    \tag{R1}\\
  & &&\quad c_i(x) \;\le\; L_i
    &&\quad \forall\, i \in \NN,
    \label{eq:reduced:ub}
    \tag{R2}\\
  & &&\quad -\Fbar_\ell^{-} \;\le\;
        \sum_{n} \Phi_{\ell n}\,\Delta P_n
        \;\le\; \Fbar_\ell^{+}
    &&\quad \forall\,\ell \in \LL,
    \label{eq:reduced:ptdf}
    \tag{R3}\\
  & &&\quad x_{ij} \;\ge\; 0
    &&\quad \forall\, i \ne j.
    \label{eq:reduced:nonneg}
    \tag{R4}
\end{alignat}
\end{subequations}
The balance constraint~\eqref{eq:ccm:totbal} is automatically
satisfied because each credit flow $x_{ij}$ adds to $c_i$ and subtracts
from $c_j$, so
$\sum_i c_i(x) = \sum_i \dbar_i = D$ identically.
\end{lemma}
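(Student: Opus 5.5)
The plan is to show that the substitution map between the two problems is a bijection between feasible sets that preserves the objective. I would carry out three steps in order.

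\emph{Step 1: from \eqref{eq:ccm} to \eqref{eq:reduced}.} Given any feasible triple $(x,c,s)$ of \eqref{eq:ccm}, constraints \eqref{eq:ccm:curtbal} and \eqref{eq:ccm:served} force $c=c(x)$ and $s=s(x)$. So $c$ and $s$ are uniquely determined by $x$; they are defined variables, not free ones. Then \eqref{eq:ccm:agbnd} becomes \eqref{eq:reduced:lb}--\eqref{eq:reduced:ub}, \eqref{eq:ccm:ptdf} becomes \eqref{eq:reduced:ptdf} with $\Delta P_n=\sum_{i:n(i)=n}(c_i(x)-\dbar_i)$, and \eqref{eq:ccm:nonneg} is \eqref{eq:reduced:nonneg}. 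Hence $x$ is feasible for \eqref{eq:reduced}, and the objective \eqref{eq:ccm:obj} equals \eqref{eq:reduced:obj} after substituting $s_i=L_i-c_i(x)$.

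\emph{Step 2: from \eqref{eq:reduced} to \eqref{eq:ccm}.} Given $x$ feasible for \eqref{eq:reduced}, set $c=c(x)$ and $s=s(x)$. Every constraint of \eqref{eq:ccm} holds immediately except \eqref{eq:ccm:totbal}. For that one I would do the telescoping computation
\[
\sum_i c_i(x)=\sum_i\dbar_i-\sum_i\sum_{j\ne i}x_{ji}+\sum_i\sum_{j\ne i}x_{ij}.
\]
Both double sums run over all ordered pairs $(i,j)$ with $i\ne j$, with the indices relabeled, so they cancel and leave $\sum_i\dbar_i$. This equals $D$ under the standing hypothesis $\sum_i\dbar_i=D$ of Lemma~\ref{lem:wellposed}; I would state that hypothesis explicitly for the reduced lemma.

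\emph{Step 3: conclude equivalence.} The two maps are mutually inverse on the feasible sets, since $(x,c,s)\mapsto x$ followed by $x\mapsto(x,c(x),s(x))$ returns the original point. Both maps preserve the objective, so the optimal values coincide and optimal solutions correspond. By Lemma~\ref{lem:wellposed}, \eqref{eq:ccm} has an optimal solution, so \eqref{eq:reduced} does too.

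The main subtlety is the balance identity in Step 2. It is the only place where the hypothesis $\sum_i\dbar_i=D$ is needed. Without that hypothesis, dropping \eqref{eq:ccm:totbal} would not be harmless, because the constraint would be violated identically rather than being redundant. The rest of the argument is mechanical substitution.
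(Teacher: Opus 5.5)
Your proposal is correct and follows the paper's route. The paper likewise substitutes the affine maps $c(x)$ and $s(x)$ into the objective and inequality constraints, and verifies the balance constraint by cancelling the two double sums over ordered pairs $i \ne j$. Your explicit use of the standing hypothesis $\sum_i \dbar_i = D$ and your mutually-inverse-maps argument spell out details the paper leaves implicit.
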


\begin{proof}[Proof sketch]
Substituting the affine definitions of $c$ and $s$ into
the objective and inequality constraints
yields~\eqref{eq:reduced}.  System balance holds identically
because each credit flow $x_{ij}$ adds to $c_i$ and subtracts
from $c_j$, so
$\sum_i c_i(x) = \sum_i \dbar_i = D$.
See Appendix~\ref{app:proofs} for the full derivation.
\end{proof}

With the reduced LP established, we turn to the relationship
between the CCM and planner feasible sets.

\subsubsection{Feasible-Set Equivalence and Welfare Recovery}%
\label{sec:equivalence}

The central structural question is whether the credit-flow
requirement restricts the set of achievable curtailment
allocations.  Both $\Cplan$ and $\Cccm$ impose the same
constraints on~$c$ (system balance, agent bounds, and PTDF flow
limits), but $\Cccm$ additionally requires the existence of
non-negative credit flows that produce~$c$
via~\eqref{eq:ccm:curtbal}.  This existential requirement could,
a priori, make $\Cccm$ strictly smaller than~$\Cplan$.
Proposition~\ref{prop:feasible} shows that it does not.

\begin{proposition}[Feasible-set equivalence under DC power flow]%
\label{prop:feasible}
Under the DC power flow approximation, $\Cccm = \Cplan$.  A curtailment vector~$c$ is achievable
through credit flows satisfying
\eqref{eq:ccm:curtbal}--\eqref{eq:ccm:nonneg}
if and only if it lies in the planner feasible set.
\end{proposition}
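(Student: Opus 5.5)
The proof is a two-sided inclusion, and the only nontrivial direction is $\Cplan \subseteq \Cccm$.

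\emph{First inclusion, $\Cccm \subseteq \Cplan$.} This is immediate. Every $c \in \Cccm$ satisfies, by definition, the same balance, bound and PTDF constraints that define $\Cplan$. The extra existential requirement on $x$ only removes vectors. The PTDF constraint is a function of $c$ alone through $\Delta P_n(c)=\sum_{i:n(i)=n}(c_i-\dbar_i)$, so the two sets impose literally the same conditions on $c$. This is the point where the DC approximation enters: line flows depend only on the final curtailment profile, not on the trade path that produced it.

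\emph{Second inclusion, $\Cplan \subseteq \Cccm$.} Fix $c \in \Cplan$ and set $\delta_i = c_i - \dbar_i$. By \eqref{eq:planner:bal} and the standing hypothesis $\sum_i \dbar_i = D$ (as in Lemma~\ref{lem:wellposed}), we have $\sum_i \delta_i = 0$. It then suffices to exhibit $x \ge 0$ with $\sum_{j\ne i} x_{ij} - \sum_{j \ne i} x_{ji} = \delta_i$ for every $i$. Once such an $x$ exists, the remaining constraints of $\Cccm$ hold because they coincide with those of $\Cplan$ evaluated at the same $c$.

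To construct $x$, partition the agents into absorbers $P=\{i:\delta_i>0\}$ and relieved agents $M=\{j:\delta_j<0\}$. Let $T=\sum_{i\in P}\delta_i=\sum_{j\in M}(-\delta_j)$. If $T=0$, take $x=0$. Otherwise use the proportional transportation plan
\[
x_{ij} = \frac{\delta_i\,(-\delta_j)}{T}\quad (i\in P,\ j\in M),
\]
and set $x_{ij}=0$ for all other pairs. This $x$ is nonnegative and satisfies $x_{ii}=0$, since $P\cap M=\emptyset$. The balances then follow directly:
\begin{itemize}
\item for $i\in P$, the outgoing credit $\sum_j x_{ij}=\delta_i$ and the incoming credit is zero;
\item for $j\in M$, the incoming credit $\sum_i x_{ij}=-\delta_j$ and the outgoing credit is zero;
\item agents with $\delta=0$ are untouched.
\end{itemize}
Plugging into \eqref{eq:ccm:curtbal} recovers $c_i=\dbar_i+\delta_i$. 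An alternative is a greedy northwest-corner construction, which uses at most $N-1$ nonzero flows.

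\emph{Main obstacle.} The difficulty is conceptual rather than computational. One must see that nonnegativity of $x$ is not restrictive, because credits between any ordered pair of agents are allowed; the trade graph is complete and directed, so any zero-sum redistribution decomposes into positive bilateral flows. One must also confirm that no constraint in \eqref{eq:ccm} depends on $x$ other than through $c(x)$. I would state explicitly that the result relies on the hypothesis $\sum_i\dbar_i=D$. Without it, $\Cccm$ is empty while $\Cplan$ may not be, so I would carry this hypothesis over from Lemma~\ref{lem:wellposed} as part of the setting.
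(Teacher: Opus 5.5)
Your proposal is correct and follows the paper's proof essentially step for step. You show the trivial inclusion $\Cccm \subseteq \Cplan$, then partition the zero-sum residuals $\delta$ into sellers and buyers, decompose them into nonnegative bilateral flows, and inherit PTDF feasibility because $\Delta P_n$ depends only on $c$. The differences are minor: you write down an explicit proportional transportation plan, and treat $T=0$ separately, where the paper cites the balanced transportation theorem; you also flag explicitly the hypothesis $\sum_i \dbar_i = D$, which the paper uses implicitly.
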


\begin{proof}
$(\Cccm \subseteq \Cplan)$:
If $c \in \Cccm$, then $c$ satisfies $\sum_i c_i = D$,
$0 \le c_i \le L_i$, and the PTDF bounds by definition, so
$c \in \Cplan$.

$(\Cplan \subseteq \Cccm)$:
Let $c^* \in \Cplan$.  Define
$\delta_i = \dbar_i - c_i^*$ for each agent~$i$.  Since
$\sum_i \dbar_i = \sum_i c_i^* = D$, the residuals satisfy
$\sum_i \delta_i = 0$.  Partition the agents into sellers
$S = \{i : \delta_i < 0\}$, who sell relief and therefore accept additional
curtailment above~$\dbar_i$, and buyers $B = \{j : \delta_j > 0\}$, who purchase
relief and reduce curtailment below~$\dbar_j$.  The zero-sum condition guarantees
$\sum_{i \in S} |\delta_i| = \sum_{j \in B} \delta_j$.

By the balanced transportation theorem~\cite{dantzig_linear_1998},
there exist flows $x_{ij} \ge 0$ for
$(i,j) \in S \times B$ satisfying
$\sum_{j \in B} x_{ij} = |\delta_i|$ for all $i \in S$ and
$\sum_{i \in S} x_{ij} = \delta_j$ for all $j \in B$.  Set
all other flows to zero.  These flows produce the target
allocation. For every agent~$i$,
$\sum_{j \ne i} x_{ji} - \sum_{j \ne i} x_{ij} = \delta_i$ by construction, so
$c_i(x) = \dbar_i - \delta_i = c_i^*$.

It remains to verify the PTDF constraints.  The net injection
change $\Delta P_n(c) = \sum_{k:\,n(k)=n}(c_k - \dbar_k)$
depends on the curtailment vector~$c$ alone, not on which
bilateral flows~$x_{ij}$ produced it.  Because
$c^* \in \Cplan$ satisfies the PTDF bounds, the constructed
point $(x, c^*)$ satisfies them as well.  Hence
$c^* \in \Cccm$.
\end{proof}

The reverse inclusion carries the substantive content, which is that any
planner-feasible allocation can be decomposed into bilateral
credit flows that respect both non-negativity and network limits.
The consequence is that the CCM, despite operating through credit trades rather than centralized dispatch,
introduces no loss of allocative capability.

\textit{Remark (Role of the DC power flow model).}
The DC/PTDF model is used here to
obtain a linear, convex clearing formulation and a tractable KKT/MILP
reformulation. The credit-flow decomposition itself relies on a different
property, where network feasibility is evaluated as a function of the final physical
curtailment allocation, not of the contractual path by which credits are
exchanged. Under AC power flow, physical line flows depend nonlinearly on
nodal injections, voltage magnitudes, voltage angles, and operating constraints,
so the planner feasible set would need to be defined over both curtailment and
network-state variables. Extending the computational formulation to AC OPF or
convex AC relaxations is therefore beyond the present DC/PTDF model, but the
bilateral accounting decomposition remains valid whenever both the planner and
the CCM impose the same feasibility projection over final curtailment
allocations.

\begin{corollary}[Planner-welfare equivalence under truthful
bidding]\label{cor:welfare}
If $b_i = v_i$ for all $i \in \NN$, then
$W_{\mathrm{CCM}} = W_{\mathrm{plan}}$.
\end{corollary}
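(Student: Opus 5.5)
The plan is to reduce the corollary to Proposition~\ref{prop:feasible} by showing that, once bids are truthful, the CCM clearing problem and the planner problem optimize the same objective over the same set of curtailment vectors. Here $W_{\mathrm{CCM}}$ is taken to mean the true welfare $\sum_i v_i(L_i - c_i)$ evaluated at the CCM clearing allocation. Under $b_i = v_i$, this coincides with the optimal value of the clearing LP~\eqref{eq:ccm}.

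First, I will invoke Lemma~\ref{lem:wellposed}, with $b = v \ge 0$ and the standing assumptions $\sum_i \dbar_i = D$ and $0 \le \dbar_i \le L_i$. This guarantees that the clearing LP has an optimal solution $(x^\dagger, c^\dagger, s^\dagger)$, so $W_{\mathrm{CCM}}$ is well defined. Similarly, $\Cplan$ is nonempty, since it contains $\dbar$. It is also compact, being bounded by $0 \le c \le L$ and closed, so $W_{\mathrm{plan}}$ is attained.

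Second, I will rewrite the CCM optimum as an optimization over the projection onto $c$. The objective $\sum_i b_i s_i = \sum_i v_i(L_i - c_i)$ depends on $(x,c,s)$ only through $c$. Hence maximizing over the full constraint set is the same as maximizing $\sum_i v_i(L_i - c_i)$ over $c \in \Cccm$. In one direction, every feasible $(x,c,s)$ yields $c \in \Cccm$ with the same objective value. In the other, every $c \in \Cccm$ lifts to a feasible triple by choosing a witnessing $x$ and setting $s = L - c$. Thus $W_{\mathrm{CCM}} = \max_{c \in \Cccm} \sum_i v_i(L_i - c_i)$.

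Finally, Proposition~\ref{prop:feasible} gives $\Cccm = \Cplan$. The planner problem~\eqref{eq:planner} is exactly $\max_{c \in \Cplan} \sum_i v_i (L_i - c_i)$, so the two maxima coincide. It also follows that any CCM clearing allocation $c^\dagger$ is planner-optimal, and conversely any planner optimum is implementable by credit flows.

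The only step needing care is the second one: checking that the projection argument matches the definition of $\Cccm$. The summation conventions must agree, since \eqref{eq:ccm:curtbal} uses $\sum_{j\ne i}$ while \eqref{eq:Cccm} uses $\sum_j$. They do agree because $x_{ii}=0$, so the two expressions define the same set. The step also requires the optimum to be attained rather than merely a supremum, which Lemma~\ref{lem:wellposed} provides. Beyond this, the corollary is immediate.
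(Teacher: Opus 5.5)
Your proposal is correct and takes the same route as the paper. Under $b=v$ you identify the CCM objective with the planner objective, then use $\Cccm=\Cplan$ from Proposition~\ref{prop:feasible} to conclude that the two optimal values coincide. Your extra steps (projecting the lifted $(x,c,s)$ problem onto $c$, and using Lemma~\ref{lem:wellposed} to show the optima are attained) only spell out what the paper's two-line proof leaves implicit.
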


\begin{proof}
Under truthful bidding, the CCM
objective~\eqref{eq:ccm:obj} becomes
$\sum_i v_i\,s_i = \sum_i v_i\,(L_i - c_i)$, which equals
the planner objective~\eqref{eq:planner}.
Proposition~\ref{prop:feasible} establishes
$\Cccm = \Cplan$, so the two problems maximize the same
objective over the same feasible set.  Their optimal values
therefore coincide.
\end{proof}

Corollary~\ref{cor:welfare} is a conditional allocative-efficiency statement, where \textit{if}
agents bid truthfully, \textit{then} the CCM's selected curtailment vector achieves the same total value of served load as the planner's. It does not imply that the payment rule is budget balanced.  The next result shows that VCG payments make truthful bidding dominant, so the allocative equivalence is achievable under strategic play.

\begin{proposition}[VCG dominant-strategy truthfulness]%
\label{prop:vcg}
Under a fixed, report-independent tie-breaking rule for selecting
among LP optima (formalized in Appendix~\ref{app:tiebreak}), VCG Clarke-pivot payments make truthful bidding
a dominant strategy for every agent on any network.
\end{proposition}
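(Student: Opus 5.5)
The argument follows the standard Groves/Clarke-pivot pattern, adapted to the CCM clearing LP. Agent~$i$'s type is $v_i$, and its utility from an outcome is its value of served load minus its payment, $u_i = v_i s_i - p_i$. Let $\sigma(b)$ denote the curtailment/served-load vector selected by the fixed, report-independent tie-breaking rule among optimal solutions of the reduced LP~\eqref{eq:reduced}. Lemma~\ref{lem:wellposed} guarantees that an optimum exists for every $b \ge 0$, so $\sigma$ is well defined. By Proposition~\ref{prop:feasible} we may regard the selection as being over $c \in \Cccm = \Cplan$ rather than over the credit flows $x$. The Clarke-pivot payment is
\[
p_i(b) = \max_{c \in \Cplan} \sum_{j \ne i} b_j (L_j - c_j) \;-\; \sum_{j \ne i} b_j\, s_j(\sigma(b)),
\]
where the first term is the leave-one-out optimum. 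It does not depend on $b_i$, since agent $i$ still physically occupies its bus and contributes its $c_i$ bounds, but carries zero weight in the objective. Call this term $h_i(b_{-i})$.

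Fix agent $i$, its true value $v_i$, and any reports $b_{-i}$ of the others. Substituting the payment gives
\[
u_i(b_i, b_{-i}) = v_i s_i(\sigma(b)) + \sum_{j \ne i} b_j s_j(\sigma(b)) - h_i(b_{-i}).
\]
The first two terms together are exactly the CCM objective evaluated at the realized allocation, with agent $i$'s bid replaced by its true value. When $b_i = v_i$, the selected allocation $\sigma(v_i,b_{-i})$ maximizes $v_i s_i + \sum_{j\ne i} b_j s_j$ over the whole feasible set. For any misreport $b_i'$, the allocation $\sigma(b_i',b_{-i})$ is still feasible, because the feasible set is report-independent. Its value under that true-type objective is therefore no larger. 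Since $h_i$ is unaffected by $b_i$, it follows that $u_i(v_i,b_{-i}) \ge u_i(b_i',b_{-i})$ for every $b_i'$ and every $b_{-i}$, which is dominant-strategy truthfulness. Nothing in this comparison uses the network topology or the PTDF values beyond the fact that the feasible set does not depend on reports. That is why the statement holds on any network.

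The main obstacle is making the optimization and tie-breaking steps rigorous. The argument needs the selected allocation under truthful reporting to be a genuine maximizer of the true-type welfare; that is where optimality of $\sigma$ is used. It also needs $h_i$ to be unaffected by ties. For $h_i$ this is immediate, because only the optimal value enters and not the optimizer, and existence of that value follows from Lemma~\ref{lem:wellposed} applied to the bid vector $(0,b_{-i})$. For $\sigma$, the tie-breaking rule must be a fixed selection from the optimal face, for example lexicographic or least-norm, that does not condition on the identity or report of any agent in a way that could be manipulated. In fact, optimality alone suffices for the inequality, since any optimal allocation under true reports attains the maximum. I would verify this carefully and note that the rule is needed only so that payments and outcomes are single-valued. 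A secondary point is the non-uniqueness of the credit flows $x$: zero-net cycles leave $c$, $s$, and utilities unchanged, so I would state the mechanism's outcome in terms of $c$ alone.
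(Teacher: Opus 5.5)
Your proof is correct and is essentially the paper's argument. The paper checks that the feasible set $\Cccm$ does not depend on reports and that each agent's value depends only on its own $s_i$, then invokes Nisan's Theorem~9.17; that theorem's content is exactly the Groves inequality you write out directly, and your side remark is also right: under truthful reports every LP optimum gives agent~$i$ the same utility, so tie-breaking only makes outcomes and payments single-valued, and the argument holds for any pivot term depending on $b_{-i}$ alone, however ``excluding'' agent~$i$ is interpreted.
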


\begin{proof}[Proof sketch]
The feasible set $\Cccm$ depends only on exogenous network
parameters, not on bids, and each agent's valuation depends
only on its own allocation.  These two properties satisfy the
hypotheses of Nisan's
Theorem~9.17~\cite{nisan_algorithmic_2007}, from which
dominant-strategy truthfulness follows under a fixed,
report-independent tie-breaking rule.
The full argument and tie-breaking formalism are in Appendix~\ref{app:proofs} and Appendix~\ref{app:tiebreak}.
\end{proof}

Together, Proposition~\ref{prop:feasible},
Corollary~\ref{cor:welfare}, and
Proposition~\ref{prop:vcg} establish a chain: the CCM can reach
every planner-feasible allocation, it achieves planner-welfare equivalence
under truthful bidding, and VCG payments incentivize truthful
bidding as a dominant strategy. VCG is not proposed as the operational payment rule. It may run a budget deficit and requires solving $N+1$ optimization sub-problems, so an external subsidy may be needed to finance the transfer imbalance. This does not affect the welfare-equivalence result, because $W^{\mathrm{CCM}}=W^{\text {plan }}$ is an allocative statement about the selected curtailment vector, not a claim that the mechanism is self-financing. Accordingly, VCG serves here as an incentive-compatibility benchmark that bounds the design space for practical payment rules.

Concretely, VCG assigns each agent~$i$ the Clarke pivot payment
\begin{equation}\label{eq:vcg-payment}
  p_i(b) \;=\; \sum_{j \ne i} b_j\, s_j^{*(-i)}(b_{-i})
          \;-\; \sum_{j \ne i} b_j\, s_j^{*}(b),
\end{equation}
where $s^{*(-i)}$ is the optimal served-load vector when agent~$i$
is excluded from the market.  The payment is computed from reported
bids and equals the declared welfare loss that agent~$i$'s
participation imposes on all other agents; under truthful bidding,
$b_j=v_j$ and this declared externality equals the true externality.
Agent~$i$'s total surplus is then
$S_i = v_i\,s_i^* - p_i$, which is the decomposition reported in
Tables~\ref{tab:peragent} and~\ref{tab:case24-peragent}.
Computing VCG payments requires $N+1$ solves of the clearing LP
per event: one solve with all $N$ agents to obtain $s^*$, and
$N$ additional leave-one-out solves (one for each agent $i$) to
obtain $s^{*(-i)}$.

\subsubsection{Exact MILP Reformulation}\label{sec:milp}

Propositions~\ref{prop:feasible}--\ref{prop:vcg} characterize the
CCM's allocative and incentive properties.  This subsection addresses the computational reformulation used in the
strategic-bidding analysis. The CCM clearing problem itself is an LP; when this
LP appears as the lower level of a bilevel bidding problem, its KKT conditions
yield a single-level mixed-integer linear program (MILP) whose size is
polynomial in the number of agents and transmission lines.

\begin{assumption}[Valid Big-M constants]\label{ass:bigm}
For each complementarity pair in the bounded KKT representation
of the reduced LP~\eqref{eq:reduced}, there exist finite constants
$M_a, M_b > 0$ that upper-bound the corresponding primal and
dual variables at every retained KKT point.
\end{assumption}

Assumption~\ref{ass:bigm} holds whenever the dual optimal set of
the reduced LP is bounded; sufficient conditions and heuristic
bound formulas (including the treatment of co-located agents, for
whom PTDF differences vanish) are given in Appendix~\ref{app:bigm}.

\begin{proposition}[Exact MILP reformulation]\label{prop:milp}
Under Assumption~\ref{ass:bigm}, the CCM clearing LP is
allocation-equivalent to a single-level MILP with
\begin{equation}\label{eq:bincount}
  \underbrace{\card{\NN}}_{\text{lower bounds on } c_i} \;+\; \underbrace{\card{\NN}}_{\text{upper bounds on } c_i} \;+\; \underbrace{2\card{\LL}}_{\text{PTDF limits}} \;+\; \underbrace{\card{\NN}(\card{\NN}-1)}_{\text{credit nonnegativity}}
\end{equation}
binary variables, one per complementarity pair in the reduced KKT
system.
\end{proposition}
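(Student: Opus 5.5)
The plan is to write the reduced LP~\eqref{eq:reduced} of Lemma~\ref{lem:reduced} in standard inequality form, derive its KKT system, and linearize each complementarity condition with one binary variable via the Big-M constants of Assumption~\ref{ass:bigm}. Since $c_i(x)$ and $\Delta P_n(c(x))$ are affine in $x$, every constraint (R1)--(R4) is an affine inequality $a_k^\top x \le g_k$. There are $\card{\NN}$ lower-bound rows (R1), $\card{\NN}$ upper-bound rows (R2), two directional rows per line (R3), giving $2\card{\LL}$, and $\card{\NN}(\card{\NN}-1)$ nonnegativity rows (R4). To each row I attach a dual multiplier: $\underline{\mu}_i,\overline{\mu}_i \ge 0$, $\lambda_\ell^+,\lambda_\ell^- \ge 0$, and $\nu_{ij}\ge 0$.

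Because the LP is feasible with finite optimum (Lemma~\ref{lem:wellposed}) and all constraints are linear, strong duality holds without any constraint qualification. Hence $x$ is optimal if and only if there exist multipliers satisfying four conditions: primal feasibility, dual feasibility, stationarity, and complementary slackness. Stationarity is computed from $\partial c_i/\partial x_{ij} = 1$, $\partial c_j/\partial x_{ij} = -1$, and $\partial(\text{flow}_\ell)/\partial x_{ij} = \Phi_{\ell,n(i)} - \Phi_{\ell,n(j)}$. This gives, for each $i \ne j$,
\[
-(b_i - b_j) + (\overline{\mu}_i - \underline{\mu}_i) - (\overline{\mu}_j - \underline{\mu}_j) + \sum_{\ell}(\lambda_\ell^+ - \lambda_\ell^-)\bigl(\Phi_{\ell,n(i)} - \Phi_{\ell,n(j)}\bigr) - \nu_{ij} = 0 .
\]
All of these conditions are linear except the complementarity products (slack)$\times$(multiplier)$=0$. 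There is exactly one such product per inequality row, which yields the count in~\eqref{eq:bincount}.

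For each pair $k$ I introduce a binary $z_k$ and impose $0 \le \text{slack}_k \le M_a z_k$ and $0 \le \text{mult}_k \le M_b(1-z_k)$. If $z_k = 0$ the slack is forced to zero, and if $z_k = 1$ the multiplier is forced to zero. So every feasible point of the MILP satisfies the KKT system, and by LP sufficiency its $x$-component is optimal for~\eqref{eq:reduced}, hence for~\eqref{eq:ccm}. Conversely, take any optimal $x$ together with a retained dual solution. Assumption~\ref{ass:bigm} guarantees the slacks are at most $M_a$ and the multipliers at most $M_b$. Setting $z_k = 1$ exactly when the slack is positive (and $z_k=0$ otherwise) then gives a feasible MILP point with the same $x$ and $c$. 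The two problems therefore share the same set of optimal allocations, which is the sense of allocation-equivalence intended in the proposition. When the clearing LP is embedded as the lower level of a bilevel bidding problem, this MILP system replaces it; any upper-level product of bids and served loads is a separate matter and is not part of the claim.

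The main obstacle I expect is the converse direction, specifically the phrase ``at every retained KKT point.'' The primal set is unbounded in $x$, since zero-net credit cycles leave $c$ unchanged, and the dual set may also be unbounded, for instance for co-located agents whose PTDF differences vanish. A valid $M_a$ therefore cannot bound every optimal $x$. I would handle this by restricting to optimal $x$ with bounded support, such as a cycle-free transportation decomposition as in the proof of Proposition~\ref{prop:feasible}, whose entries are at most $\max_i L_i$. I would also retain one bounded dual vertex, whose existence the assumption posits. Allocation-equivalence then needs only that every optimal $c$ is attained by some such retained pair, which the cycle-cancellation argument supplies.
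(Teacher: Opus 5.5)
Your route is the paper's. You form the KKT system of the reduced LP~\eqref{eq:reduced}, apply the Fortuny-Amat--McCarl linearization with one binary per complementarity pair, and read off the count~\eqref{eq:bincount}. Your two-direction argument and the cycle-free transport bound $x_{ij}\le\min(L_i,L_j)$ are a more explicit version of the paper's compactified representation $0\le x_{ij}\le D$ in Appendix~\ref{app:bigm}.

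There is, however, a concrete error in your displayed stationarity equation: the bid term has the wrong sign relative to your multiplier terms. Write the problem as $\min_x \sum_k b_k c_k(x)$, with every row in the form $g\le 0$. The Lagrangian terms are then $\overline{\mu}_k(c_k-L_k)$, $-\underline{\mu}_k c_k$, $\lambda^+_\ell(\text{flow}_\ell-\Fbar^+_\ell)$, $-\lambda^-_\ell(\text{flow}_\ell+\Fbar^-_\ell)$ and $-\nu_{ij}x_{ij}$. Differentiating in $x_{ij}$ gives
\[
(b_i-b_j)+(\overline{\mu}_i-\underline{\mu}_i)-(\overline{\mu}_j-\underline{\mu}_j)+\sum_{\ell}(\lambda^+_\ell-\lambda^-_\ell)\bigl(\Phi_{\ell,n(i)}-\Phi_{\ell,n(j)}\bigr)-\nu_{ij}=0,
\]
which is \eqref{eq:app:stationarity}. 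Your version, with $-(b_i-b_j)$, is the KKT system of $\min_x\sum_i b_i s_i(x)$. Your forward direction (every MILP-feasible point is LP-optimal by sufficiency) would therefore certify the welfare-minimizing allocation.

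A two-agent check shows this. Take no lines, $b_1>b_2$, $\dbar_1>0$ and $0<D<L_2$. At the true optimum $c_1=0$ and $x_{21}=\dbar_1>0$, so $\nu_{21}=0$. Your equation for $(i,j)=(2,1)$ then forces $\underline{\mu}_1=b_2-b_1<0$, violating dual feasibility. With the sign corrected, the rest of your argument (counting, linearization, and both directions of allocation-equivalence) goes through.

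A minor inaccuracy: co-located agents do not make the dual optimal set unbounded. That set is unbounded exactly when the reduced LP has no strictly feasible point. In the 3-bus case, $c=\dbar$ with all $x_{ij}=t>0$ is strictly feasible. Co-location only defeats the heuristic congestion-dual bound in Appendix~\ref{app:bigm}, which divides by a PTDF difference.
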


\begin{proof}[Proof sketch]
The reduced LP (Lemma~\ref{lem:reduced}) has only inequality
constraints, so KKT conditions are necessary and
sufficient~\cite{bertsimas_introduction_1997}.
Linearizing each complementarity pair with the Fortuny-Amat
and McCarl method~\cite{fortuny-amat_representation_1981}
introduces one binary per pair, yielding the
count~\eqref{eq:bincount}.  The full KKT system and
linearization are in Appendix~\ref{app:kkt}.
\end{proof}

The reduced LP is feasible and has finite optimal value
(Lemma~\ref{lem:wellposed}), so its KKT system characterizes
optimality. Under Assumption~\ref{ass:bigm}, the Big-M
linearization returns an optimal KKT point in the bounded
representative accounting formulation described in
Appendix~\ref{app:bigm}. It is important to note that the MILP reformulation is required only when the CCM clearing problem appears as a lower-level subproblem in a bilevel formulation---for example, when an upper-level agent optimizes its bid anticipating the market-clearing outcome. For the single-level clearing problem itself, the reduced LP (Lemma 2) suffices and is solvable in polynomial time by any standard LP algorithm. The MILP's computational burden is therefore a property of the bilevel incentive-analysis setting, not of the market-clearing operation per se. When the LP has multiple allocation-relevant optima,
the MILP selects one, subject to the tie-breaking convention
discussed in Appendix~\ref{app:tiebreak}.


\section{Test Networks and Benchmark Design}\label{sec:numerical}

We test the CCM on three networks, each chosen to isolate a different structural feature of
the mechanism. The 3-bus, 6-agent instance places agents on opposite sides of a single
binding constraint, so every credit flow and dual variable can be inspected directly. The
modified IEEE 24-bus, 8-agent instance distributes agents across a meshed 38-line topology
where congestion can appear on multiple lines simultaneously. The reduced 57-bus New York
grid instance assigns five zonal-proxy agents to a sequential corridor in which the most
valuable trades must cross a chain of nested interface constraints to reach the highest-VOLL
zone. All models are implemented in Pyomo and
solved with Gurobi~11 (absolute and relative MIPGap $= 10^{-6}$).
Section~\ref{sec:results} reports results.

\subsection{Test Networks}\label{sec:testbeds}
 
\subsubsection{3-Bus Network}

The first network consists of three buses and three transmission lines with symmetric
capacities (i.e., $\bar{F}^+_\ell = \bar{F}^-_\ell \equiv \bar{F}_\ell$):
$\bar{F}_{1\to2} = 80$~MW, $\bar{F}_{2\to3} = 100$~MW, and $\bar{F}_{3\to1} = 40$~MW
(a deliberate bottleneck). All line
reactances are equal, and Bus~3 serves as the reference bus.  Six
agents participate in a single-period scarcity event requiring
$D = 230$~MW of total curtailment.  Table~\ref{tab:agents-3bus}
lists each agent's bus location, baseline load, exogenous
curtailment obligation, and true VOLL.  The VOLL values span a 100:1 ratio, from \$50{,}000/MWh (DC-1,
a data center running real-time inference) to \$500/MWh (VPP-2,
a battery and HVAC aggregator).  This range is conservative
relative to the $167\times$ spread reported in recent customer
surveys~\cite{charles_gibbons_value_2024}.
 
\begin{table}[ht]
\centering
\caption{Agent parameters for the 3-bus network. Here, $L_i$ denotes the agent's load level before the scarcity-event curtailment reallocation. It is not a counterfactual demand-response baseline used to measure avoided consumption. $\dbar_i$ is the
  exogenous curtailment obligation, and $v_i$ is
  value of lost load (VOLL), the cost of unserved
  energy.}
\label{tab:agents-3bus}
\begin{tabular}{lccccp{2.6cm}}
\toprule
Agent & Bus & $L_i$ (MW) & $\dbar_i$ (MW)
  & $v_i$ (\$/MWh) & Type \\
\midrule
DC-1   & 1 & 200 & 50 & 50{,}000 & Data center \\
DC-2   & 1 & 150 & 50 & 20{,}000 & Data center \\
C\&I   & 2 & 100 & 30 &  5{,}000 & C\&I \\
Res    & 2 &  80 & 20 &  1{,}500 & Residential agg. \\
VPP-1  & 3 &  60 & 40 &  1{,}000 & VPP \\
VPP-2  & 3 &  50 & 40 &    500   & Battery + HVAC \\
\midrule
Total  &   & 640 & 230 &         & \\
\bottomrule
\end{tabular}
\end{table}

Figure~\ref{fig:network-credit-flows} shows the 3-bus topology
with representative CCM credit flows; the Bus~3 to Bus~1
bottleneck limits relief trades from low-VOLL to high-VOLL agents.

\begin{figure}[ht]
\centering
\includegraphics[width=0.5\textwidth]{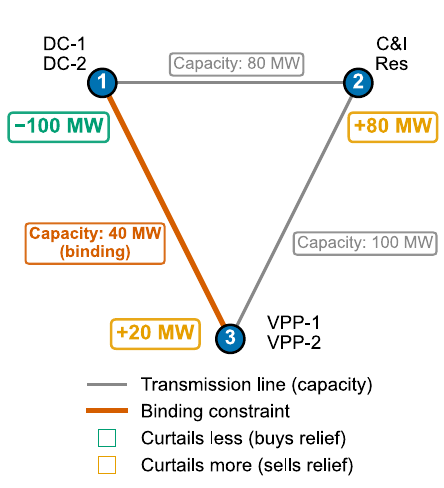}
\caption{3-bus network schematic with representative CCM credit
outcomes. Signed labels show the net change in curtailment at each bus after trading. A negative value (green) indicates the bus buys relief and curtails less, while a positive value (yellow) indicates the bus sells relief and curtails more. The Bus~3 to Bus~1 corridor is the deliberate network
bottleneck.}
\label{fig:network-credit-flows}
\end{figure}

Under the DC approximation with equal reactances and Bus~3 as
reference, the PTDF matrix is
\[
  \Phi =
  \begin{bmatrix}
    +\tfrac{1}{3} & -\tfrac{1}{3} \\[2pt]
    +\tfrac{1}{3} & +\tfrac{2}{3} \\[2pt]
    -\tfrac{2}{3} & -\tfrac{1}{3}
  \end{bmatrix},
\]
with rows corresponding to lines $(1{\to}2,\; 2{\to}3,\;
3{\to}1)$ and columns to non-reference buses $(1, 2)$.

\subsubsection{IEEE 24-Bus Network}
 
The second network places eight strategic agents on the IEEE
Reliability Test System (RTS-24), preserving the full 24-bus,
38-branch transmission topology.  Table~\ref{tab:agents-case24}
lists each agent's bus, baseline load, exogenous curtailment
obligation, and VOLL\@.  The agents span the same four facility
types as the 3-bus instance (data centers, commercial and
industrial loads, residential aggregators, and virtual power
plants), and their VOLL values preserve the 100:1 ratio
(\$50{,}000/MWh to \$500/MWh).  The total curtailment obligation
is 492~MW, distributed across the eight agents.

\begin{table}[ht]
\centering
\caption{Agent parameters for the IEEE 24-bus network.}
\label{tab:agents-case24}
\begin{tabular}{lccccp{2.6cm}}
\toprule
Agent  & Bus & $L_i$ (MW) & $\dbar_i$ (MW)
  & $v_i$ (\$/MWh) & Type \\
\midrule
DC-1   & 22 & 200 & 105 & 50{,}000 & Data center \\
DC-2   & 21 & 150 &  88 & 20{,}000 & Data center \\
C\&I-1 & 3  & 180 &  88 &  5{,}000 & C\&I \\
C\&I-2 & 8  & 170 &  70 &  3{,}000 & C\&I \\
Res-1  & 9  & 100 &  53 &  1{,}500 & Residential agg. \\
Res-2  & 10 &  80 &  35 &  1{,}000 & Residential agg. \\
VPP-1  & 16 &  60 &  35 &    800   & VPP \\
VPP-2  & 23 &  50 &  18 &    500   & Battery + HVAC \\
\midrule
Total  &    & 990 & 492 & & \\
\bottomrule
\end{tabular}
\end{table}

Congestion is induced by derating the transformers that feed
Bus~9, which creates a binding inter-area bottleneck analogous
to the 40~MW corridor in the 3-bus network.  This setup embeds
the CCM in a standard transmission topology to test whether its
welfare and participation properties survive network
heterogeneity, while still abstracting from generator dispatch,
reserve activation, and operator-specific curtailment protocols.

\subsubsection{NYGrid Network}\label{sec:edit-nygrid-testbed}

The third network uses five representative buses associated with NYISO zones A, E, F,
I, and K as CCM agents---selected to span the main upstate-to-downstate congestion
corridor that determines how much curtailment can be shifted from low-VOLL to
high-VOLL regions.  The five representative buses were selected to span the main
upstate-to-downstate congestion corridor, from Niagara~E
(Zone~A) through Moses~W and Rotterdam
(Zones~E and F) to the downstate load centers at CE~UG (Zone~I)
and Northport on Long Island (Zone~K). In the reduced PTDF model, IF\_8 acts as the gateway from the upstate proxy buses into the
downstate corridor, IF\_9 captures the broader SENY transfer
cut, and IF\_11 is the final interface into Long Island. Although the reduced network retains all 11 NYISO interface constraints, the operationally important corridor for the CCM is the sequential path IF\_8--IF\_9--IF\_11, which governs upstate-to-Long~Island transfers.  The full $11 \times 5$ interface PTDF matrix is given in Table~\ref{tab:ptdf-matrix} (Appendix~\ref{app:additional-results}). Each agent is a zonal proxy anchored at one representative bus.  This tests whether the CCM remains effective beyond facility-level
examples while preserving network bottlenecks from a
realistic New York grid model.

A full-year production cost simulation of the New York
grid (built on NLR's Sienna platform as part of the ACORN project~\cite{open_source_nys_grid_2022, acorn_nygrid_2025}) supplies the scarcity scenario.  The simulation applies
climate-adjusted 2021 load profiles under an RCP~4.5
warming trajectory, scales all loads by $1.8\times$ to
induce deep scarcity, and solves 365 daily unit
commitment problems with PTDF-based DC power flow.  We
extract the peak curtailment hour (July~9, 2021 at
6:00~PM), which requires $D = 6{,}815$~MW of total
load shedding across the five agent buses.

Table~\ref{tab:agents-nygrid} lists the agent
parameters.  The VOLL values increase from
\$500/MWh in western upstate (Zone~A) to
\$50{,}000/MWh on Long Island (Zone~K), preserving
the 100:1 spread used in the other networks.
Curtailment obligations are assigned pro-rata
($\dbar_i = L_i \cdot D / \sum_j L_j$) on
this network.

\begin{table}[ht]
\centering
\caption{Agent parameters for the NYGrid network.
  $L_i$ is load at the representative bus, $\dbar_i$ is the assigned
  curtailment obligation, and $v_i$ is value of
  lost load (VOLL), the cost of unserved energy.}
\label{tab:agents-nygrid}
\begin{tabular}{lccccc}
\toprule
Agent & Bus & Zone & $L_i$ (MW)
  & $\dbar_i$ (MW) & $v_i$ (\$/MWh) \\
\midrule
Niagara E   & 55 & A & 1{,}990
  &   869 &     500 \\
Moses W     & 47 & E & 1{,}406
  &   614 &   1{,}250 \\
Rotterdam   & 41 & F & 3{,}419
  & 1{,}493 &   3{,}000 \\
CE UG       & 78 & I & 2{,}804
  & 1{,}224 &  20{,}000 \\
Northport   & 80 & K & 5{,}993
  & 2{,}616 &  50{,}000 \\
\midrule
Total       &    &   & 15{,}612
  & 6{,}815 & \\
\bottomrule
\end{tabular}
\end{table}

The reduced network enforces 11 interface constraints
(IF\_1 through IF\_11), derived from the NYISO zonal transfer limits
in the reduced NPCC model \cite{open_source_nys_grid_2022}, with
asymmetric flow limits.  The binding constraint is
IF\_11 (the Long Island interface), which permits
1{,}290~MW southbound but only 515~MW northbound in the simulation parameterization~\cite{acorn_nygrid_2025}.
This directional bottleneck limits how much
curtailment relief the downstate proxy agents can
import from the upstate proxy agents.

\subsection{Benchmark Regimes}\label{sec:benchmark-regimes}

We evaluate four curtailment regimes against a common metric.
Each regime produces a curtailment vector~$c$; we then
compute \emph{social welfare} for the reduced
curtailment-allocation problem as
\begin{equation}\label{eq:welfare-metric}
  W \;=\; \sum_{i \in \NN} v_i \,(L_i - c_i),
\end{equation}
using each agent's VOLL~$v_i$ (a private valuation that can
only be estimated in practice and is unobserved by the market
operator), regardless of how the regime assigns curtailment.
Social welfare equals the total value of served load, weighting
each agent's curtailment loss by its own VOLL.  Because
generation dispatch and the aggregate curtailment requirement
are fixed ex ante, supply-side costs do not vary across regimes,
and social welfare reduces to the demand-side total value of
served load.  Throughout this paper, ``welfare'' and ``social
welfare'' refer to this same quantity~$W$.

The first two regimes are non-market benchmarks that assign
curtailment by fixed rules and involve no credit trading.

\textit{Administrative allocation.}
Each agent bears its exogenous obligation:
$c_i^{\mathrm{admin}} = \dbar_i$.  This regime reproduces the
status quo in which the ISO assigns curtailment without
considering private valuations.  It tests whether the CCM
improves on an allocation that ignores VOLL heterogeneity.

\textit{Pro-rata allocation.}
Each agent is curtailed at the system-wide ratio:
$c_i^{\mathrm{pr}} = L_i \cdot (D / \sum_j L_j)$. Pro-rata sharing distributes curtailment proportional to load. U.S. system operators do not uniformly apply pro-rata at emergency load shed; current practice is heterogeneous and includes rotating feeder-based disconnection, priority-tier rules, and SCADA-level under-frequency load shedding. Pro-rata is used here as a welfare-comparable numerical baseline, where it tests whether proportional allocation outperforms or underperforms administrative assignment when VOLL varies across agents.

Because administrative and pro-rata allocations are fixed-rule benchmarks rather than optimization outcomes, we separately audit their feasibility under the same PTDF constraints imposed on the CCM and planner. For each fixed-rule allocation $c^r$, $r \in \{\mathrm{admin},\mathrm{pr}\}$, we compute the induced injection changes $\Delta P_n(c^r)$, line/interface flows $\sum_n \Phi_{\ell n}\Delta P_n(c^r)$, directional slacks, and the minimum PTDF slack. A fixed-rule welfare value is physically comparable with CCM and planner welfare only when this minimum slack is nonnegative up to numerical tolerance.

The remaining two regimes solve optimization problems.

\textit{CCM (market).}
The CCM clearing LP~\eqref{eq:ccm} is solved under truthful
bidding ($b_i = v_i$), with VCG payments.  This regime tests
whether the market mechanism recovers the planner's social welfare
through decentralized bids, as
Corollary~\ref{cor:welfare} predicts.

\textit{Omniscient planner.}
The planner LP~\eqref{eq:planner} is solved with full knowledge
of all VOLLs.  The planner takes the same curtailment
target~$D$, network topology, and line limits as the other three
regimes; it does not control generation dispatch or set the
supply--demand balance.  Its only advantage over the CCM is
direct access to private valuations, which removes the need for
bids.  The planner therefore provides the welfare upper bound
$W_{\mathrm{plan}}$ against which all other regimes are measured, where
any gap between $W_{\mathrm{CCM}}$ and $W_{\mathrm{plan}}$
reflects information loss from decentralized bidding, not a
difference in available supply or network capacity.

To quantify the CCM's approach to planner-welfare equivalence, we
report the efficiency ratio
\begin{equation}\label{eq:eta-eff}
  \eta_{\mathrm{eff}}
    \;=\; \frac{W_{\mathrm{CCM}}}{W_{\mathrm{plan}}}.
\end{equation}
A ratio of one confirms that the CCM achieves the planner's
welfare on that instance.  

\textit{Copperplate (unconstrained).}
The copperplate benchmark solves the planner LP~\eqref{eq:planner}
with all transmission constraints removed.  This isolates the welfare cost
imposed by network congestion, since any gap between copperplate and
constrained planner welfare is attributable to binding line limits.

\section{Results and Discussion}\label{sec:results}

\subsection{Social Welfare Comparison}\label{sec:welfare-results}


The CCM matches the planner's social welfare to numerical tolerance on
every network ($\eta_{\mathrm{eff}} \ge 0.9999$), confirming
Corollary~\ref{cor:welfare}.  Table~\ref{tab:welfare} reports social welfare across four regimes.  The fixed-rule administrative and pro-rata welfare values in the static cases satisfy the same PTDF constraints as the CCM and planner, so they are physically comparable benchmark allocations rather than infeasible counterfactuals. Gains over pro-rata allocation
range from $1.41\times$ on NYGrid to $1.83\times$ on the IEEE 24-bus network.

\begin{table}[ht]
\centering
\caption{Social welfare comparison across four curtailment
  regimes: administrative allocation, pro-rata allocation,
  CCM, and the omniscient planner.  The reported fixed-rule
  baseline allocations are PTDF-feasible in the static cases.
  On NYGrid, obligations are assigned pro-rata, so only the
  pro-rata value is reported.}
\label{tab:welfare}
\begin{tabular}{lcccc}
\toprule
Network
  & $W_{\mathrm{admin}}$
  & $W_{\mathrm{prorata}}$
  & $W_{\mathrm{CCM}}$
  & $W_{\mathrm{plan}}$ \\
\midrule
3-bus (6 agents) &
  \$9.97M & \$8.78M & \$13.26M & \$13.26M \\
IEEE 24-bus (8 agents) &
  \$6.90M & \$7.40M & \$13.58M & \$13.58M \\
NYGrid (5 zonal proxies) & -- & \$207.79M & \$293.41M & \$293.41M \\
\bottomrule
\end{tabular}
\end{table}


On the 3-bus network, the CCM raises social welfare from
\$9.97M under administrative allocation to \$13.26M, a
$1.33\times$ gain that matches the planner exactly
(Figure~\ref{fig:welfare-3bus}).  A surprising feature of this
instance is that pro-rata curtailment performs \emph{worse} than
administrative allocation (\$8.78M versus \$9.97M), even though
pro-rata is often treated as the default ``fair'' rule. The
reason is a VOLL-load mismatch in which pro-rata assigns curtailment
proportional to load, but the agents with the largest loads are
also the ones with the highest interruption costs.


Pro-rata assigns each agent a cut proportional to its load
($c_i^{\mathrm{pr}} = L_i \cdot D / \sum_j L_j$).  On this
instance the system-wide cut rate is
$D / \sum_j L_j = 230/640 = 35.9\%$.  DC-1, the
highest-VOLL agent at \$50{,}000/MWh, also carries the largest
load (200~MW), so pro-rata assigns it $200 \times 0.359 =
71.9$~MW of curtailment.  Its administrative obligation was only
50~MW.  Pro-rata therefore increases DC-1's burden by 21.9~MW,
penalizing the agent whose interruption costs the most, and
reduces welfare by \$1.19M relative to the administrative
allocation.

This pattern is empirically common.  Data centers combine large
power draws with high interruption costs because their capital
intensity and service-level agreements scale with consumption.
The same applies to semiconductor fabs and hospitals with
critical-care units.  Whenever high-VOLL agents also carry large
loads, pro-rata creates a VOLL-load mismatch that destroys
welfare that a market could recover.  The CCM captures these
gains by shifting curtailment from high-VOLL to low-VOLL agents.
On this network, DC-1 and DC-2 purchase full curtailment relief
($c_i^{\mathrm{CCM}} = 0$), while VPP-1, VPP-2, and Res absorb
additional obligations.


Table~\ref{tab:peragent} reports the per-agent curtailment and
surplus under the administrative and CCM regimes. Every agent ends up at least as well off under VCG transfers as under the administrative baseline. This holds on all three networks; whether it holds in general is left for future work.

\begin{table}[ht]
\centering
\caption{Per-agent curtailment and surplus on the 3-bus network.
  $S_{\mathrm{CCM}}$ equals each agent's gross value of served
  load plus net VCG transfers received, so the column sum
  exceeds $W_{\mathrm{CCM}}$ whenever VCG runs a deficit.}
\label{tab:peragent}
\begin{tabular}{lcccccc}
\toprule
Agent & $v_i$ & $\dbar_i$ & $c_i^{\mathrm{CCM}}$
  & $S_{\mathrm{admin}}$ & $S_{\mathrm{CCM}}$
  & Gain \\
  & (\$/MWh) & (MW) & (MW)
  & (\$/per) & (\$/per) & (\$/per) \\
\midrule
DC-1  & 50{,}000 & 50 &  0 & 7.50M & 9.75M & 2.25M \\
DC-2  & 20{,}000 & 50 &  0 & 2.00M & 2.75M & 0.75M \\
C\&I  &  5{,}000 & 30 & 50 & 0.35M & 0.46M & 0.11M \\
Res   &  1{,}500 & 20 & 80 & 0.09M & 0.36M & 0.27M \\
VPP-1 &  1{,}000 & 40 & 50 & 0.02M & 0.06M & 0.04M \\
VPP-2 &    500   & 40 & 50 & 0.01M & 0.01M & 0.01M \\
\bottomrule
\end{tabular}
\end{table}

High-VOLL and low-VOLL agents gain surplus through different
channels.  DC-1 captures \$2.25M of the total improvement
because its curtailment falls from 50~MW to zero, saving up to
\$2.5M in avoided interruption costs net of VCG payments.  This
is the \emph{allocation channel}, in which the agent benefits primarily
because the market moves curtailment away from it.  Low-VOLL
agents benefit through the \emph{payment channel} instead.  Res,
for example, absorbs 60 additional MW of curtailment at
\$1{,}500/MWh, yet its surplus rises from \$0.09M under
administrative allocation to \$0.36M under the CCM, because VCG
payments more than offset its added curtailment cost.  These
participant-level surplus gains do not imply budget balance.  Net
VCG revenue can be positive or negative, depending on the
instance, and an external subsidy finances any shortfall.

\begin{figure}[htbp!]
  \begin{subfigure}[t]{0.48\textwidth}
    \includegraphics[width=\linewidth]{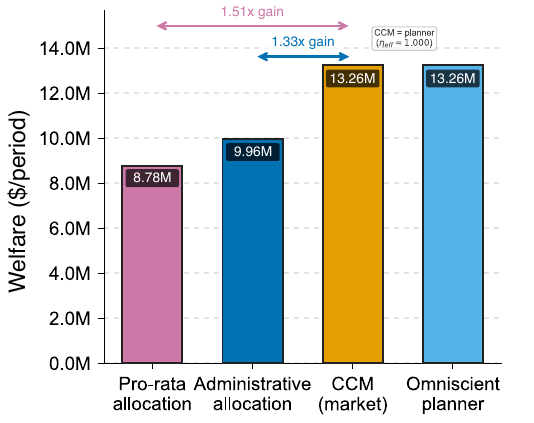}
    \subcaption{3-bus network}
    \label{fig:welfare-3bus}
  \end{subfigure}
  \hfill
  \begin{subfigure}[t]{0.48\textwidth}
    \includegraphics[width=\linewidth]{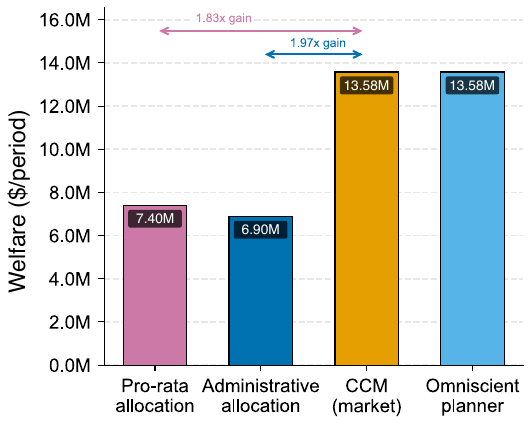}
    \subcaption{IEEE 24-bus network}
    \label{fig:welfare-case24}
  \end{subfigure}

  \bigskip
  \centering
  \begin{subfigure}[t]{0.5\textwidth}
    \includegraphics[width=\linewidth]{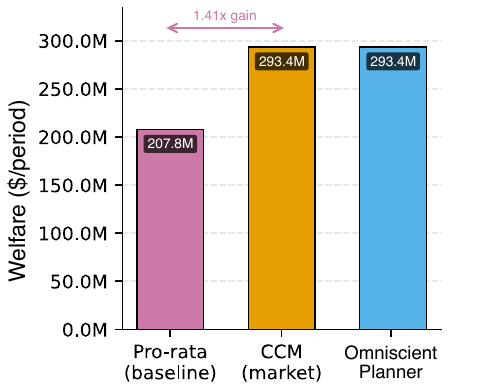}
    \subcaption{NYGrid network}
    \label{fig:welfare-nygrid}
  \end{subfigure}

  \caption{Social welfare comparison across three networks.
  Panels~(a) and~(b) compare four regimes:
  administrative allocation (fixed obligations),
  pro-rata allocation (curtailment proportional to
  load), CCM, and the omniscient planner.
  Panel~(c) reports only three regimes because NYGrid
  assigns obligations pro-rata. Welfare is reported per scarcity event. Arrows report multiplicative gains relative to the indicated baseline: ``gain vs administrative'' or ``gain vs pro-rata.'' In all three networks, CCM
  matches planner welfare and exceeds the baseline
  allocation rule(s).}
  \label{fig:welfare-all}
\end{figure}


The modified IEEE 24-bus network tests whether these properties survive geographic
dispersion.  Eight agents are distributed across a 38-line
topology with multiple potential binding constraints, unlike the
3-bus network, where high-VOLL and low-VOLL agents sit on
opposite ends of a single bottleneck.  The CCM nearly doubles social welfare relative to administrative allocation (\$13.58M versus
\$6.90M), and all eight agents gain under VCG benchmark
transfers (Figure~\ref{fig:welfare-case24}).

The VOLL-load mismatch reverses on this instance.  The system-wide
cut rate is $492/990 = 49.7\%$, but DC-1's administrative
obligation ratio is $105/200 = 52.5\%$ and DC-2's is
$88/150 = 58.7\%$.  Administrative allocation overloads both
data centers relative to their load shares, so pro-rata provides
partial relief and raises social welfare from \$6.90M to \$7.40M.
The ranking between the two non-market benchmarks therefore
carries no general welfare ordering.  It depends on whether the
ISO's initial obligations over- or under-load high-VOLL agents
relative to their load shares.  The CCM removes this
sensitivity.  On both networks, DC-1 and DC-2 purchase full
curtailment relief ($c_i^{\mathrm{CCM}} = 0$), while lower-VOLL
agents absorb the transferred obligation.


The NYGrid network demonstrates the CCM at a different level of
aggregation.  Each agent is a zonal proxy associated with one
representative bus rather than an individual facility, so the mechanism reallocates curtailment across a reduced inter-area
representation of the New York grid. This network confirms that
feasible-set equivalence, planner-welfare equivalence, and individual
rationality all hold under zonal-proxy aggregation, not only at the
facility level.  The CCM improves
social welfare by $1.41\times$ over the pro-rata baseline (\$293.41M
versus \$207.79M) and matches the planner to numerical tolerance
($\eta_{\mathrm{eff}} = 1.0000$).  The gain is smaller than on
the facility-level networks because the Long Island import interface IF\_11 limits how much curtailment the upstate proxy agents
can absorb for the downstate proxy agents.  The copperplate benchmark, which removes
all interface limits including IF\_11, reaches $1.71\times$,
indicating that network congestion accounts for the gap between
achievable and unconstrained welfare.
Section~\ref{sec:congestion} analyzes this bottleneck in detail.

\subsection{Network Congestion Effects}\label{sec:congestion}


Transmission bottlenecks reduce CCM welfare on every network, but
the severity varies by nearly two orders of magnitude.  Comparing
each constrained optimum against a copperplate version of the same
instance reveals welfare costs of \$40{,}000 per period (0.3\%) on
the 3-bus network, \$161{,}295 (1.17\%) on the IEEE 24-bus network, and \$62.34M
(17.5\%) on NYGrid. Credit price wedges tell a different story.
The line shadow price $\mu_\ell$ measures the marginal welfare
value of one additional MW of capacity on line $\ell$, whereas the
credit-price wedge measures the difference in CCM clearing prices
between two buses. Although both have units of \$/MWh, they capture
distinct objects. The 3-bus bottleneck creates a \$16{,}000/MWh
wedge between buses, the IEEE 24-bus network creates a
\$7{,}441/MWh wedge across agent buses, and IF\_11 in NYGrid
separates Long Island from all upstate buses by \$47{,}000/MWh. On NYGrid, no other interface binds, so this
single directional constraint accounts for the entire congestion
cost.


Three features of the network and agent landscape explain the
60-fold range in congestion costs.  The most important is how much
beneficial reallocation the bottleneck intercepts.  On the 3-bus
network, one binding line constrains only a subset of trades among
30 ordered agent pairs.  On NYGrid, a single interface constrains
the entire north-to-south transfer path, and most beneficial trades
must cross it.  When more useful reallocation must traverse the
same constraint, congestion destroys more welfare.  The VOLL
gradient across the binding constraint amplifies or attenuates this
effect.  The 3-bus bottleneck separates \$1{,}000/MWh agents from
\$50{,}000/MWh agents but blocks only a small volume of trade.
IF\_11 separates \$500--\$3{,}000/MWh upstate proxy agents from \$20{,}000--\$50{,}000/MWh downstate proxy agents and constrains 6{,}815~MW of total curtailment, so each blocked MW destroys far
more welfare.  A third factor is the PTDF coefficient on the
binding constraint, which governs how sensitively credit flows load
the line.  A higher coefficient means a smaller volume of trade
triggers the limit.


These drivers also explain why a large credit price wedge can
coexist with a small aggregate welfare loss.  The stationarity
condition~\eqref{eq:app:stationarity} decomposes the bid spread
between any trading pair into agent-bound duals, congestion duals,
and the credit-flow dual.  For any active trade, the congestion
component grows with both the shadow price on the binding
constraint and the PTDF difference between the two agents' buses.
A tight constraint combined with a large PTDF difference can
therefore produce a wide price wedge even when the total volume of
blocked trade remains small.  The IEEE 24-bus case illustrates this mechanism.
Transmission constraints reduce welfare by only 1.2\%, yet credit
prices diverge by \$7{,}441/MWh across agent buses, because the
binding constraint carries a high shadow price and the relevant
agent pairs sit at buses with large PTDF differences.


These results establish locational credit pricing as a first-order
determinant of market outcomes on congested networks.  When a
single interface reshapes welfare by more than a sixth and splits
credit prices by tens of thousands of dollars per MWh, any clearing
mechanism that ignores location leaves large surplus unrealized.


Figure~\ref{fig:linesweep-all} traces how welfare responds to
capacity at the binding interface.  On the 3-bus network
(Figure~\ref{fig:linesweep}), $W_{\mathrm{CCM}}$ rises steeply
with line~$3 \to 1$ capacity, then saturates at 50~MW once
the constraint no longer binds.  Beyond that threshold, additional
capacity yields no further welfare gain.  The shadow price
$\mu_{31}$ mirrors this behavior on the right axis, falling
from approximately \$30{,}000/MWh at tight capacity to zero at
saturation.  In addition, at tight $3{\to}1$ capacities, the fixed pro-rata allocation is not
physically executable because it violates the same PTDF constraint imposed on
the CCM and planner. This strengthens the role of network-constrained
clearing. The CCM remains feasible by construction because it optimizes
only over PTDF-feasible reallocations. On NYGrid (Figure~\ref{fig:nygrid-linesweep}),
$W_{\mathrm{CCM}}$ and $W_{\mathrm{planner}}$ overlap at every capacity level, as expected from
(Corollary~\ref{cor:welfare}) under truthful bidding. The numerical contribution of the sweep is therefore not the equality itself, but the welfare magnitude, the saturation pattern, and the congestion value implied by  IF\_11. The pro-rata welfare line $W_{\mathrm{prorata}}$ sits flat near
\$208M because pro-rata assigns curtailment by load share,
independent of how much transfer capacity the network provides. Note that this fixed NYGrid baseline remains
PTDF-feasible at every IF\_11 sweep point.
As IF\_11 capacity increases, both CCM and planner welfare rise
toward the copperplate upper bound (the dashed line at
approximately \$355M, which removes all interface limits).


The welfare--capacity curve shows diminishing returns, eventually plateauing.  Early increments
unlock the trades with the largest VOLL differential across the
bottleneck, while later increments serve progressively lower-value
reallocation.  Beyond the saturation threshold, all beneficial
trades clear and additional capacity has zero marginal welfare
value.  Within the model's static setting, the shadow price on the binding directional limit equals the marginal welfare value of relaxing that same direction by one MW. For example, when the upper/southbound IF\_11 limit binds,
$\mu^+_{\mathrm{IF\_11}}$ is the local marginal welfare value of increasing
that southbound limit. Whether this shadow price translates into a useful
investment signal depends on multi-period and stochastic extensions
beyond the scope of this paper.

\begin{figure}[htbp!]
  \begin{subfigure}[t]{0.48\textwidth}
    \includegraphics[width=\linewidth]{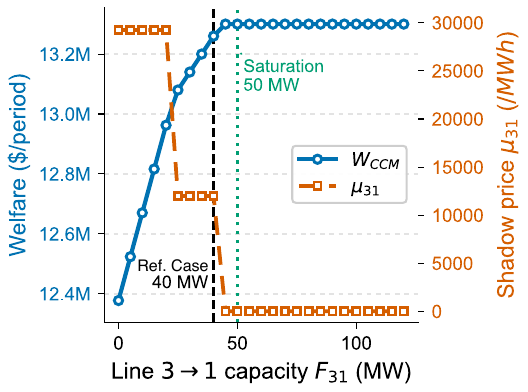}
    \subcaption{3-bus: line~$3{\to}1$ capacity sweep}
    \label{fig:linesweep}
  \end{subfigure}
  \hfill
  \begin{subfigure}[t]{0.48\textwidth}
    \includegraphics[width=\linewidth]{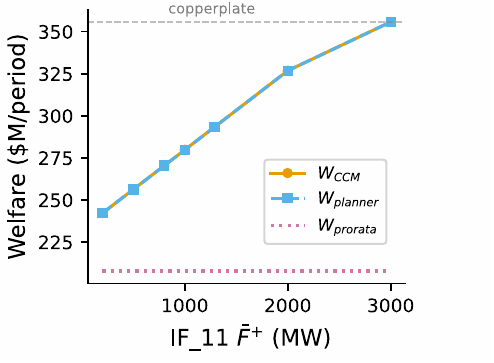}
    \subcaption{NYGrid: IF\_11 southbound ($\bar{F}^{+}$) capacity sweep}
    \label{fig:nygrid-linesweep}
  \end{subfigure}
  \caption{Social welfare sensitivity to binding interface capacity.
  Left: 3-bus network (line~$3{\to}1$), showing CCM welfare
  and the line-capacity shadow price $\mu_{31}$; welfare rises
  with line capacity and saturates as $\mu_{31}$ falls to zero.
  Right: NYGrid network (IF\_11), showing welfare under CCM,
  planner, pro-rata, and copperplate benchmarks. CCM and planner
  overlap under truthful bidding, as predicted by
  Corollary~\ref{cor:welfare}. The fixed pro-rata baseline is
  PTDF-feasible at every plotted IF\_11 capacity point because
  NYGrid obligations are assigned pro-rata by construction, so
  the fixed baseline induces $\Delta P \equiv 0$.}
  \label{fig:linesweep-all}
\end{figure}

\subsection{Incentive Compatibility}\label{sec:incentive-results}


The social welfare gains reported in Section~\ref{sec:welfare-results}
assume that every agent bids its true VOLL.  This section tests
whether agents can profit by deviating from truthful bidding.  For
each agent, we compute the gain ratio
\[
g = \frac{U_{\mathrm{best}}^* - U_{\mathrm{truthful}}}
         {U_{\mathrm{truthful}}},
\]
where $U_{\mathrm{best}}^*$ is the highest payoff observed on a
one-dimensional bid grid under unilateral deviation, holding all
other agents at their true values.  We evaluate two settlement
rules: Vickrey--Clarke--Groves (VCG), the theoretical benchmark
from Proposition~\ref{prop:vcg}, and a uniform-price (UP)
benchmark that pays all cleared credits at a single market-wide
price.  VCG eliminates measured manipulation incentives on all
three networks.  UP, by contrast, creates large gains for
structurally exposed agents.


VCG yields zero measured gain for every agent on every network:
six agents on the 3-bus network (Table~\ref{tab:incentive}),
eight agents on the IEEE 24-bus case
(Table~\ref{tab:case24-incentive} in the Appendix), and five
zonal-proxy agents on NYGrid
(Table~\ref{tab:nygrid-incentive} in the Appendix).  The
mechanism that produces this result operates through two steps.
First, the Clarke pivot payment charges each agent~$i$ the
welfare loss that $i$'s participation imposes on all other agents.
Second, because that loss depends only on other agents' reports
in $i$'s absence, agent~$i$'s bid cannot influence the payment
baseline.  Truthful bidding therefore maximizes
$v_i s_i - p_i$ regardless of what other agents report
(Proposition~\ref{prop:vcg}).  The numerical zero-gain result is
consistent with this dominant-strategy guarantee; the exact
theoretical statement holds under a fixed, report-independent
tie-breaking rule.

Removing all transmission constraints (the copperplate check)
preserves the zero-gain result for every agent on the tested
grid.  This outcome is expected because VCG's incentive properties
derive from the payment structure, not from the feasible set
geometry.  The Clarke pivot computes each agent's externality
from a welfare-maximization problem that retains the same
algebraic form whether or not line limits bind.  Congestion
changes the optimal allocation and the welfare levels, but it
does not alter the argument that makes truthful reporting
dominant.


The UP benchmark reveals a different pattern.  Not all agents
benefit equally from manipulation, and the agents most vulnerable
to UP exploitation share three structural features.  First,
intermediate VOLL places the agent in the interior of the bid
ranking, where the clearing allocation responds sensitively to
bid changes; agents at the extremes of the ranking already
receive maximal curtailment or full relief, so their allocations
are locally insensitive to bid perturbations.  Second, a net
credit seller position under truthful bidding means the UP
settlement price applies to credits the agent sells rather than
buys, so inflating that price directly increases revenue.  Third,
location on a congested corridor gives the agent leverage over the
local clearing price, because fewer competing sellers can
substitute for its position in the network.  With this framework
in hand, we trace the manipulation mechanism on each network.

Table~\ref{tab:incentive} reports the 3-bus incentive metrics.
VPP-1, a low-VOLL agent (\$1{,}000/MWh) at Bus~3, satisfies all
three vulnerability criteria.  It sits in the interior of the
bid ranking, sells 10~MW of net credits above its 40~MW
obligation under truthful bidding, and occupies the downstream
end of the binding line~$3 \to 1$.  By overbidding to
$b = 4.94 \times v$, VPP-1 inflates the UP settlement price on
the credits it sells without changing its curtailment allocation,
which remains at 50~MW.  The improvement is a pure pricing
effect that raises VPP-1's payoff by 197\%.

\begin{table}[ht]
\centering
\caption{Incentive metrics on the 3-bus network.
  $g$ is the percentage surplus gain from the best
  unilateral deviation relative to truthful bidding.
  Columns report this gain under VCG and uniform-price
  (UP) settlement, and the corresponding best-response
  bid multiple $b^*/v$ relative to true VOLL;
  $g = 0$ indicates no profitable deviation was observed
  on the tested bid grid.}
\label{tab:incentive}
\begin{tabular}{lcccc}
\toprule
Agent & $g_{\mathrm{VCG}}$ (\%) & $g_{\mathrm{UP}}$ (\%)
  & $b_{\mathrm{VCG}}^* / v$ & $b_{\mathrm{UP}}^* / v$ \\
\midrule
DC-1  &  0.0 &   0.0 & 0.18 & 0.18 \\
DC-2  &  0.0 &   0.0 & 0.47 & 0.47 \\
C\&I  &  0.0 &  29.6 & 0.31 & 2.12 \\
Res   &  0.0 &  41.7 & 0.10 & 3.39 \\
VPP-1 &  0.0 & 197.1 & 0.52 & 4.94 \\
VPP-2 &  0.0 &   0.0 & 0.10 & 0.10 \\
\bottomrule
\end{tabular}
\end{table}

Figure~\ref{fig:bid-deviation} traces VPP-1's payoff across the
full bid range.  Within the stable curtailment regime
($b/v \lesssim 5$), VPP-1 remains curtailed at 50~MW and sells
10~MW of net credits.  The VCG payoff stays flat at its truthful
value (\$60K) throughout this regime, while the UP payoff climbs
as the overbid inflates the settlement price.  Beyond
$b/v \approx 5$, the optimizer assigns VPP-1 full service
($c = 0$), switching it from a net credit seller to a 40~MW
buyer.  The required credit purchase dominates its earned value,
and both payoffs collapse below $-\$100$K.  The $4.94 \times v$
bid therefore sits at the boundary of the profitable regime, and
any further inflation pushes VPP-1 into a loss.  High-VOLL agents
(DC-1, DC-2) show no measured gain under either settlement rule,
because they already receive full curtailment relief under
truthful bidding and cannot benefit from further bid inflation.

\begin{figure}[htbp!]
\centering
\includegraphics[width=0.6\textwidth]{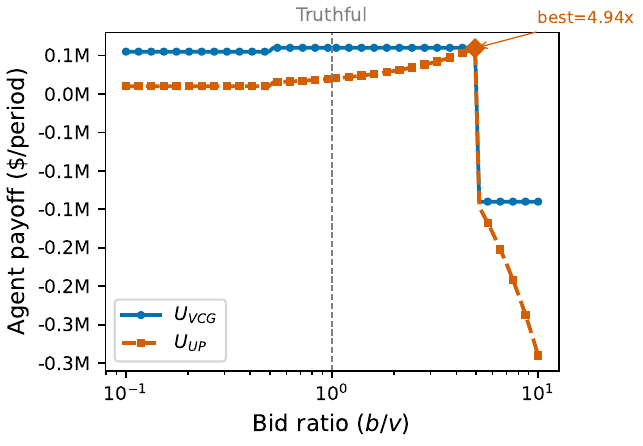}
\caption{Can a low-value agent profit by overstating its
interruption cost?  VPP-1 (\$1{,}000/MWh, Bus~3) sweeps
its bid while all others bid truthfully.  For
$b/v \lesssim 5$, VPP-1 sells 10~MW of net credits, and VCG
is flat at its truthful payoff (\$60K) while UP climbs as the
overbid inflates the settlement price (197\% gain at
$b/v \approx 4.9$).  Beyond this threshold the optimizer
assigns zero curtailment, flipping VPP-1 to a 40~MW
credit buyer at high-VOLL prices, and both payoffs
collapse.  VCG then flattens because its charge depends
only on VPP-1's impact on others; UP keeps falling
because the settlement price still tracks VPP-1's bid.}
\label{fig:bid-deviation}
\end{figure}

The IEEE 24-bus network confirms the same pattern on a larger, more
congested topology.  VCG again yields zero measured gain for all
eight agents, while the UP benchmark permits gains up to 283.8\%
for C\&I-2 (Table~\ref{tab:case24-incentive} in the Appendix).
C\&I-2 satisfies the same three vulnerability criteria that
VPP-1 satisfies on the 3-bus network, namely intermediate VOLL, a net
seller position under truthful bidding, and location on a
congested corridor where its position gives it leverage over the
local clearing price.  The larger welfare at stake on the IEEE 24-bus network,
combined with more severe congestion, amplifies the manipulation
gain from 197\% to 284\%.

On the NYGrid network, VCG again yields zero measured gain for
all five zonal-proxy agents. A uniform-price benchmark, however, would not reflect locational scarcity on this instance. The
binding IF\_11 constraint produces a Long-Island-versus-upstate
price split, so no single clearing price can simultaneously
reflect scarcity at both locations.  This result reinforces the
congestion analysis from Section~\ref{sec:congestion}, since when a
single interface reshapes welfare by more than a sixth and
splits credit prices by tens of thousands of dollars per MWh,
any settlement rule that ignores location fails not only on
welfare grounds but also on incentive grounds.  In the CCM, the same congestion shadow prices that indicate binding transmission constraints can be used to split credit prices by location, just as congestion terms split LMPs in ISO energy markets. Because ISOs already operate this locational pricing framework for LMPs, applying it to CCM credit settlement extends an existing practice rather than creating a new one.  This alignment strengthens the case for locational credit settlement in practice.

\subsection{Computational Performance}\label{sec:computational}
 
Table~\ref{tab:computation} reports the MILP size and solve time on all three networks
(all instances solved on an AMD Ryzen Threadripper PRO 3995WX with 252 GB RAM).
 
\begin{table}[ht]
\centering
\caption{MILP size and solve time (Gurobi~11,
  MIPGap~$= 10^{-6}$).}
\label{tab:computation}
\begin{tabular}{lccc}
\toprule
Network & Binary variables & Solve time \\
\midrule
3-bus (6 agents, 3 lines)    &  48 & 0.009~s \\
IEEE 24-bus (8 agents, 38 lines) & 148 & 0.034~s \\
NYGrid (5 agents, 11 interfaces) & 52 & 0.009~s \\
\bottomrule
\end{tabular}
\end{table}
 
Binary counts follow formula~\eqref{eq:bincount}. The
3-bus MILP solves in 0.009~s with 48 binaries; the modified IEEE 24-bus network
solves in 0.034~s with 148 binaries; NYGrid solves in 0.009~s with
52 binaries. These timings show that the baseline allocation problem is small
for the proof-of-concept systems studied here. These runtimes exclude VCG settlement,
which requires additional leave-one-out solves and benchmark-regime computations.
 
Big-M sensitivity analysis confirms that the MILP objective
is stable to 4 significant figures across $M$, $10M$, and $100M$
scaling of the Big-M constants on all three networks.  This
stability matters in practice because a Big-M that is too small cannot
accommodate the dual variables and produces infeasibility, while
a Big-M that is unnecessarily large weakens the LP relaxation
and slows branch-and-bound convergence.  The 4-digit stability
across a $100\times$ range confirms that the heuristic bounds
derived in Appendix~\ref{app:bigm} sit well within the
reliable numerical range for the three networks tested here.  As
Appendix~\ref{app:bigm} notes, degenerate instances or networks
with co-located agents may require bounds computed from a solved
dual LP or dual regularization.

The $\card{\NN}(\card{\NN}-1)$ credit-nonnegativity binaries
dominate the count and grow quadratically in agent count.  On
the three networks, credit-flow binaries account for 63\%
(3-bus: 30/48), 38\% (IEEE 24-bus: 56/148), and 38\% (NYGrid:
20/52) of total binaries.  As agent count grows, this share
approaches 100\% because the $O(|\mathcal{N}|^2)$ credit-flow
term dominates the $O(|\mathcal{L}|)$ PTDF term.  For a
50-agent instance on a 100-line network, the formula predicts
$2(50) + 2(100) + 50(49) = 2{,}750$ binaries.  Whether
instances of that size are tractable depends on factors beyond
binary count (as the IEEE 24-bus solve time illustrates), so
decomposition methods or LP relaxation heuristics may be needed
at operational scale; investigating these approaches is beyond
the scope of this proof-of-concept study.


\section{Conclusion}\label{sec:conclusion}

This paper introduced the network-constrained Curtailment Credit Market, a centrally cleared mechanism that allows agents with mandatory curtailment obligations to trade bilateral credits subject to DC power flow feasibility constraints. Three results establish the CCM's theoretical foundation. First, the set of curtailment allocations reachable through bilateral credit flows coincides exactly with the centralized planner's feasible set ($\Cccm = \Cplan$), so the market structure introduces no loss of allocative capability. Second, under truthful bidding the CCM recovers the planner's social welfare, and VCG payments make truthful bidding a dominant strategy for every agent on any network. Third, when an agent chooses its bid while anticipating the clearing outcome, the resulting two-level problem admits an exact single-level MILP reformulation by replacing the lower-level market-clearing problem with its Karush--Kuhn--Tucker (KKT) optimality conditions and linearizing the resulting complementarity constraints. These results could help policymaking because they show that a curtailment market can replace fixed-rule rationing without sacrificing efficiency. Letting loads trade gives up no outcome the operator could otherwise reach. The trades then settle on the efficient allocation even though the operator never learns any load's private interruption cost. The operator can also measure exactly how much an agent stands to gain by misreporting, rather than rely on heuristics.

Numerical experiments on three test networks confirm these properties and reveal allocative patterns that the theory does not predict on its own. The CCM improves social welfare by $1.41\times$ to $1.83\times$ over pro-rata curtailment across the three test networks; these ratios reflect the VOLL spread among agents, network topology, and congestion severity of each instance. Across the cases, the CCM raises welfare by shifting curtailment from high-VOLL to low-VOLL agents, while the ranking of pro-rata against administrative allocation depends on the instance. In every case tested, VCG benchmark transfers then left each agent better off than under the administrative baseline, although this held for the reported cases rather than as a general guarantee. The NYGrid case confirms the mechanism under realistic inter-area constraints. A natural next step, for research and for practical deployment, is a hierarchical design that pairs a representative-bus inter-zonal CCM for reallocation between areas with a facility-level CCM within each zone. Because this layering mirrors how operators already separate inter-area and intra-zonal operations, the CCM could in principle be adopted zone by zone rather than all at once.

Network congestion is a first-order determinant of market outcomes. Its effect ranges from negligible on the unconstrained 3-bus network to large on NYGrid, where a single binding interface accounts for the entire gap between constrained and unconstrained welfare and splits credit prices sharply by location. A single uniform credit price would miss this locational scarcity. The CCM's congestion duals decompose credit prices the same way congestion components decompose energy LMPs in existing energy markets. An operator could therefore compute locational credit prices with the same network model, shift factors, and settlement systems it already runs for energy, without a new clearing engine or a new institution to price curtailment by location. A CCM would thus build largely on existing infrastructure: its main additions are a bid format for curtailment obligations and a credit-settlement ledger layered on top of systems operators already run, though realizing this in practice would still require changes to market rules and settlement software.

The present analysis is static, adopts the DC power flow approximation, and demonstrates the mechanism at proof-of-concept scale. The static, single-period formulation isolates the fundamental allocative and incentive properties of curtailment credit trading before introducing inter-temporal constraints such as battery state-of-charge dynamics, thermal mass inertia, or duration-dependent VOLL. VCG payments serve as an incentive-compatibility benchmark but are not a practical payment rule as they may require an external subsidy when net mechanism revenue is negative, and they require solving $N+1$ optimization sub-problems per clearing event. The uniform-price settlement benchmark maintains budget balance but permits manipulation gains reaching 197\% on the 3-bus test network and 284\% on the IEEE 24-bus network. Closing this gap between incentive compatibility and budget balance is the central open problem.

A promising direction for future work is to replace VCG benchmark transfers with a budget-balanced repeated mechanism. Credit-based repeated-allocation designs may offer approximate incentive compatibility while conserving total credits, but the CCM adds a locational challenge, where credits at different buses are not physically equivalent and must be coupled through PTDF feasibility constraints. We leave this dynamic, locational settlement problem for future work.

Three additional extensions are needed for practical implementation. First, the MILP's $\card{\NN}(\card{\NN}-1)$ binary variables grow quadratically in the number of agents, which means that scaling to realistic topologies will require decomposition methods, LP relaxation heuristics, or agent aggregation. Second, the DC power flow approximation that underpins feasible-set equivalence (Proposition~\ref{prop:feasible}) must also be relaxed, since under AC power flow, line flows depend nonlinearly on nodal injections and the decoupling between credit flows and physical flows no longer holds. Third, operational deployment will require measurement and verification of realized curtailment together with periodic auditing of pre-curtailment loads $L_i$, since stale baselines would allow agents to sell credits against consumption they have already permanently eliminated through energy efficiency investments. Energy efficiency is therefore complementary to the CCM rather than substitutable with it because it reduces $L_i$ (and, under pro-rata allocation, $\dbar_i$), shrinking the pool of curtailable capacity available for trade while leaving the reallocation mechanism intact. Taken together with the repeated-setting extension outlined above, a repeated CCM with credit-based payments would combine the allocative capability established here with budget balance and approximate incentive compatibility, without requiring the monetary transfers that VCG demands or the full knowledge of private costs that centralized dispatch assumes.

\section*{Acknowledgment}
This work was supported by gifts to Environmental Defense Fund and Cornell Atkinson Center for Sustainability from the David and Patricia Atkinson Foundation, and in part by the Indonesia Endowment Fund for Education (LPDP). The authors thank Reynold Li for helpful discussions.

\bibliographystyle{IEEEtran}
\bibliography{CCM}

\clearpage
\appendix
\numberwithin{equation}{section}
\makeatletter
\@addtoreset{table}{section}
\@addtoreset{figure}{section}
\makeatother
\renewcommand{\theequation}{\Alph{section}.\arabic{equation}}
\renewcommand{\thetable}{\Alph{section}.\arabic{table}}
\renewcommand{\thefigure}{\Alph{section}.\arabic{figure}}

\section{Proofs of Main Results}\label{app:proofs}

\begin{proof}[Proof of Lemma~\ref{lem:wellposed}]
The no-trade allocation $(x, c, s) = (0,\,\dbar,\,L - \dbar)$
satisfies every constraint in~\eqref{eq:ccm}: the equality
constraints hold by substitution, system balance and agent bounds
hold by the instance assumptions $\sum_i \dbar_i = D$ and
$0 \le \dbar_i \le L_i$, the PTDF constraints hold because
$\Delta P_n = 0$ at every bus, and nonnegativity holds
trivially.  Because $0 \le s_i \le L_i$ and $b_i \ge 0$,
the objective satisfies
$\sum_i b_i s_i \le \sum_i b_i L_i < \infty$, so the objective is
bounded above.  Since the feasible region is a nonempty closed
polyhedron, a linear objective bounded above on it attains its
maximum, so an optimal solution exists~\cite{bertsimas_introduction_1997}.
\end{proof}

\begin{proof}[Proof of Lemma~\ref{lem:reduced}]
The equality constraints \eqref{eq:ccm:curtbal}
and~\eqref{eq:ccm:served} define $c$ and $s$ as affine functions
of~$x$.  Substituting into the objective and the remaining
inequality constraints yields~\eqref{eq:reduced}.  For the
balance constraint:
$\sum_i c_i(x) = \sum_i \dbar_i
  - \sum_{i \ne j} x_{ij} + \sum_{i \ne j} x_{ij} = D$,
where the cancellation holds because both double sums range over
the same set of ordered pairs $(i,j)$ with $i \ne j$.
\end{proof}


\begin{proof}[Proof of Proposition~\ref{prop:vcg}]
The CCM clearing LP maximizes $\sum_i b_i\,s_i$ over the
feasible set~$\Cccm$, which depends only on the exogenous
parameters $(\dbar, L, \Phi, \Fbar)$ and not on the bid
vector~$b$.  Each agent's valuation $v_i\,s_i$ depends only on
its own allocation~$s_i$.  These two properties satisfy the
hypotheses of Nisan's Theorem~9.17~\cite{nisan_algorithmic_2007}, where the
allocation rule maximizes declared welfare over a bid-independent
feasible set, and valuations are private.  A fixed,
report-independent tie-breaking rule among LP optima (e.g.,
lexicographic vertex selection) completes the preconditions,
so VCG Clarke-pivot payments make truthful reporting a dominant
strategy for every agent.  The tie-breaking formalism and its
connection to the MPEC literature are detailed in Appendix~\ref{app:tiebreak}.
\end{proof}

\begin{proof}[Proof of Proposition~\ref{prop:milp}]
By Lemma~\ref{lem:reduced}, the CCM clearing problem reduces to
a feasible LP with finite optimal value in
$x = (x_{ij})_{i \ne j}$ and with only
inequality constraints~\eqref{eq:reduced:lb}--\eqref{eq:reduced:nonneg}.
KKT conditions are therefore necessary and sufficient for
optimality~\cite{bertsimas_introduction_1997}.  The KKT system comprises
primal feasibility, dual feasibility, a stationarity equation
for each ordered pair $(i,j)$, and one complementarity pair per
inequality constraint (the full system is given in
Appendix~\ref{app:kkt}).  Replacing each complementarity pair
$0 \le a \perp b \ge 0$ with the Fortuny-Amat and McCarl
linearization ($a \le M_a\,z$, $b \le M_b\,(1-z)$,
$z \in \{0,1\}$)~\cite{fortuny-amat_representation_1981} produces an exact
linearization of the bounded KKT representation under
Assumption~\ref{ass:bigm}.  One binary variable per
complementarity pair yields the count~\eqref{eq:bincount}.
\end{proof}

\section{Additional Numerical Results}\label{app:additional-results}

Table~\ref{tab:case24-peragent} confirms the participation
claim for the IEEE 24-bus network, where both data centers receive full
curtailment relief, lower-VOLL agents absorb additional
obligation, and every agent gains relative to the administrative
allocation under VCG benchmark transfers.

\begin{table}[H]
\centering
\caption{Per-agent curtailment and surplus on the stylized IEEE 24-bus
network.  $S_{\mathrm{CCM}}$ equals each agent's gross value of served load plus net VCG transfers received, so the column sum exceeds  $W_{\mathrm{CCM}}$ whenever VCG runs a deficit.}
\label{tab:case24-peragent}
\resizebox{\textwidth}{!}{\begin{tabular}{l r r r r r r}
\toprule
Agent & $\bar{d}_i$ (MW) & $c_\text{admin}$ (MW) & $c_\text{CCM}$ (MW) & $S_\text{admin}$ (\$/per) & $S_\text{CCM}$ (\$/per) & Gain (\$/per) \\
\midrule
DC-1 & 105 & 105.0 & 0.0 & 4,750,000 & 9,459,705 & 4,709,705 \\
DC-2 & 88 & 88.0 & 0.0 & 1,240,000 & 2,510,705 & 1,270,705 \\
C\&I-1 & 88 & 88.0 & 78.1 & 460,000 & 477,973 & 17,973 \\
C\&I-2 & 70 & 70.0 & 170.0 & 300,000 & 461,116 & 161,116 \\
Res-1 & 53 & 53.0 & 53.9 & 70,500 & 73,705 & 3,205 \\
Res-2 & 35 & 35.0 & 80.0 & 45,000 & 347,504 & 302,504 \\
VPP-1 & 35 & 35.0 & 60.0 & 20,000 & 223,514 & 203,514 \\
VPP-2 & 18 & 18.0 & 50.0 & 16,000 & 273,352 & 257,352 \\
\midrule
Total & & & & 6,901,500 & 13,827,573 & 6,926,073 \\
\bottomrule
\end{tabular}
}
\end{table}

Table~\ref{tab:case24-incentive} reports incentive metrics on
the IEEE 24-bus network.  VCG again shows zero measured gain from
unilateral deviation for all agents, whereas the UP benchmark
admits substantial manipulation gains, reaching 283.8\% for
agent C\&I-2.

\begin{table}[H]
\centering
\caption{Incentive metrics on the stylized IEEE 24-bus network.
$g$ is the percentage surplus gain from the best
unilateral deviation relative to truthful bidding.
Columns report this gain under VCG and uniform-price
(UP) settlement, and the corresponding best-response
bid multiple $b^*/v$ relative to true VOLL;
$g = 0$ indicates no profitable deviation was observed
on the tested bid grid.}
\label{tab:case24-incentive}
\begin{tabular}{l r r r r}
\toprule
Agent & $g_\text{VCG}$ (\%) & $g_\text{UP}$ (\%) & $b^*_\text{VCG}/v$ & $b^*_\text{UP}/v$ \\
\midrule
DC-1 & 0.0 & 0.0 & 0.20 & 0.40 \\
DC-2 & 0.0 & 0.0 & 0.46 & 1.23 \\
C\&I-1 & 0.0 & 64.4 & 1.63 & 3.76 \\
C\&I-2 & 0.0 & 283.8 & 0.31 & 4.98 \\
Res-1 & 0.0 & 10.3 & 0.53 & 3.27 \\
Res-2 & 0.0 & 0.0 & 0.10 & 0.10 \\
VPP-1 & 0.0 & 0.0 & 0.10 & 1.00 \\
VPP-2 & 0.0 & 0.0 & 0.10 & 0.93 \\
\bottomrule
\end{tabular}

\end{table}

Tables~\ref{tab:nygrid-peragent}
and~\ref{tab:nygrid-incentive} report the analogous
results for the NYGrid network.  All five zonal-proxy agents
gain surplus relative to the pro-rata baseline, and no
agent achieves a positive gain ratio under VCG on the
tested bid grid.

\begin{table}[H]
\centering
\caption{Per-agent curtailment and surplus on the
  NYGrid network.  $S_{\mathrm{CCM}}$ equals each
  agent's gross value of served load plus net VCG
  transfers received, so the column sum exceeds
  $W_{\mathrm{CCM}}$ whenever VCG runs a deficit.}
\label{tab:nygrid-peragent}
\begin{tabular}{lcccccc}
\toprule
Agent & $v_i$ & $\dbar_i$ & $c_i^{\mathrm{CCM}}$
  & $S_{\mathrm{pr}}$ & $S_{\mathrm{CCM}}$
  & Gain \\
  & (\$/MWh) & (MW) & (MW) & (\$/per) & (\$/per)
  & (\$/per) \\
\midrule
Niagara E (A)   &     500 &   869 & 1{,}990
  &  0.56M &  3.36M & 2.80M \\
Moses W (E)     &   1{,}250 &   614 & 1{,}406
  &  0.99M &  2.38M & 1.39M \\
Rotterdam (F)   &   3{,}000 & 1{,}493 & 2{,}093
  &  5.78M & 15.99M & 10.21M \\
CE UG (I)       &  20{,}000 & 1{,}224 &     0
  & 31.60M & 52.41M & 20.81M \\
Northport (K)   &  50{,}000 & 2{,}616 & 1{,}326
  & 168.86M & 229.49M & 60.63M \\
\bottomrule
\end{tabular}
\end{table}

\begin{table}[H]
\centering
\caption{Incentive metrics on the NYGrid network.
  $g$ is the percentage surplus gain from the best
  unilateral deviation relative to truthful bidding,
  and $b_{\mathrm{VCG}}^*/v$ is the best-response
  bid multiple relative to true VOLL under VCG.
  $g = 0$ indicates no profitable deviation was
  observed on the tested bid grid.  No uniform-price
  benchmark is reported because asymmetric interface
  constraints produce location-specific prices, so a
  single clearing price is not well defined.}
\label{tab:nygrid-incentive}
\begin{tabular}{lcc}
\toprule
Agent & $g_{\mathrm{VCG}}$ (\%)
  & $b_{\mathrm{VCG}}^*/v$ \\
\midrule
Niagara E (A)   & 0.0 & 1.00 \\
Moses W (E)     & 0.0 & 1.00 \\
Rotterdam (F)   & 0.0 & 1.00 \\
CE UG (I)       & 0.0 & 1.00 \\
Northport (K)   & 0.0 & 1.00 \\
\bottomrule
\end{tabular}
\end{table}

\FloatBarrier

Table~\ref{tab:ptdf-matrix} lists the $11 \times 5$ interface
PTDF matrix used for the NYGrid CCM\@.  The block-diagonal
structure reflects the upstate--downstate corridor, where buses~41,
47, and~55 influence only the upstate interfaces
(IF\_1--IF\_8), while buses~78 and~80 appear only in
IF\_9--IF\_11.  IF\_10 has zero sensitivity to all five agent
buses, so it imposes no active constraint on credit trades.

\begin{table}[H]
\centering
\caption{Interface PTDF matrix for the five CCM agent buses.
  Entry $\Phi_{\ell,i}$ gives the fraction of a 1~MW injection at
  bus~$i$ that flows through interface~$\ell$ in its positive
  (southbound) direction.}
\label{tab:ptdf-matrix}
\begin{tabular}{lccccc}
\toprule
 & Bus 41 & Bus 47 & Bus 55 & Bus 78 & Bus 80 \\
 & (Zone F) & (Zone E) & (Zone A) & (Zone I) & (Zone K) \\
\midrule
IF\_1   &  0.02 &  0.13 &  0.92 &  0.0 &  0.0 \\
IF\_2   &  0.02 &  0.13 &  0.92 &  0.0 &  0.0 \\
IF\_3   & $-$0.16 &  0.81 &  0.76 &  0.0 &  0.0 \\
IF\_4   & $-$0.02 & $-$0.13 &  0.08 &  0.0 &  0.0 \\
IF\_5   & $-$0.62 &  0.36 &  0.33 &  0.0 &  0.0 \\
IF\_6   &  0.45 &  0.45 &  0.43 &  0.0 &  0.0 \\
IF\_7   &  0.32 &  0.30 &  0.27 &  0.0 &  0.0 \\
IF\_8   &  1.00 &  1.00 &  1.00 &  0.0 &  0.0 \\
IF\_9   &  0.0 &  0.0 &  0.0 & $-$1.00 & $-$1.00 \\
IF\_10  &  0.0 &  0.0 &  0.0 &  0.0 &  0.0 \\
IF\_11  &  0.0 &  0.0 &  0.0 &  0.0 & $-$1.00 \\
\bottomrule
\end{tabular}
\end{table}
 
\section{KKT System and MILP Construction}\label{app:kkt}
 
This appendix provides the full KKT system of the reduced
LP~\eqref{eq:reduced}, the Fortuny-Amat and McCarl
linearization that produces the MILP in
Proposition~\ref{prop:milp}, the binary variable count
derivation, and the Big-M sufficient conditions referenced by
Assumption~\ref{ass:bigm}.
 
\subsection{KKT Stationarity Derivation}\label{app:stationarity}
 
The reduced LP (Lemma~\ref{lem:reduced}) maximizes
$\sum_{i \in \NN} b_i\bigl(L_i - c_i(x)\bigr)$ over
$x = (x_{ij})_{i \ne j}$ subject to the inequality
constraints~\eqref{eq:reduced:lb}--\eqref{eq:reduced:nonneg}.
We associate non-negative dual multipliers with each constraint:
$\alpha_i \ge 0$ with the lower bound $c_i(x) \ge 0$,
$\beta_i \ge 0$ with the upper bound $c_i(x) \le L_i$,
$\mu^+_\ell \ge 0$ with the upper PTDF limit
$\sum_n \Phi_{\ell n}\,\Delta P_n \le \Fbar_\ell^+$,
$\mu^-_\ell \ge 0$ with the lower PTDF limit
$-\sum_n \Phi_{\ell n}\,\Delta P_n \le \Fbar_\ell^-$,
and $\gamma_{ij} \ge 0$ with the nonnegativity constraint
$x_{ij} \ge 0$.

The stationarity condition requires
$\partial \mathcal{L} / \partial x_{ij} = 0$ for every ordered
pair $(i,j)$ with $i \ne j$, where $\mathcal{L}$ is the
Lagrangian.  A unit increase in $x_{ij}$ raises $c_i$ by one
and lowers $c_j$ by one (from the definition
$c_k(x) = \dbar_k - \sum_{m \ne k} x_{mk}
+ \sum_{m \ne k} x_{km}$).  This change affects the objective,
the agent bounds on both~$i$ and~$j$, and every PTDF constraint
through the net injection change
$\Delta P_n = \sum_{k:\,n(k)=n}(c_k - \dbar_k)$.
 
Differentiating the Lagrangian with respect to $x_{ij}$ and
collecting terms yields the stationarity equation:
\begin{equation}\label{eq:app:stationarity}
0 = (b_i - b_j) - (\alpha_i - \alpha_j) + (\beta_i - \beta_j) + \sum_{\ell \in \LL}(\mu^+_\ell - \mu^-_\ell)\Phi_{\ell,n(i)} - \sum_{\ell \in \LL}(\mu^+_\ell - \mu^-_\ell)\Phi_{\ell,n(j)} - \gamma_{ij}.
\end{equation}
The four terms have the following origin.  The term $(b_i - b_j)$
captures the marginal objective effect, where shifting one MW of
curtailment from agent~$j$ to agent~$i$ reduces~$j$'s curtailment
(raising welfare by~$b_j$) and raises~$i$'s curtailment (lowering
welfare by~$b_i$), and the LP maximizes so the gradient sign
is $(b_i - b_j)$ after accounting for the max-to-min conversion
in the KKT conditions.
The term $-(\alpha_i - \alpha_j)$ accounts for the agent lower
bounds: increasing $c_i$ tightens the constraint $c_i \ge 0$ at
agent~$i$ and relaxes it at agent~$j$.
The term $+(\beta_i - \beta_j)$ accounts for the agent upper
bounds symmetrically.
The PTDF term reflects the marginal congestion cost, where the net
injection change at bus~$n(i)$ increases by one MW and at
bus~$n(j)$ decreases by one MW, and the PTDF coefficients
$\Phi_{\ell,n(i)} - \Phi_{\ell,n(j)}$ translate these bus-level
changes into line-flow changes.
The multiplier $\gamma_{ij}$ enforces the nonnegativity of
$x_{ij}$.

\subsection{Full Complementarity System}\label{app:complementarity}
 
The KKT system of the reduced LP comprises primal feasibility
\eqref{eq:reduced:lb}--\eqref{eq:reduced:nonneg}, dual
feasibility $\alpha, \beta, \mu^+, \mu^-, \gamma \ge 0$,
the stationarity equation~\eqref{eq:app:stationarity} for each
ordered pair $(i,j)$ with $i \ne j$, and the following
complementarity conditions:
\begin{subequations}\label{eq:app:comp}
\begin{align}
  0 \le \alpha_i \;\perp\; c_i(x)
    &\ge 0
    &&\forall\, i \in \NN,
    \label{eq:app:comp:lb}\\[3pt]
  0 \le \beta_i \;\perp\; \bigl(L_i - c_i(x)\bigr)
    &\ge 0
    &&\forall\, i \in \NN,
    \label{eq:app:comp:ub}\\[3pt]
  0 \le \mu^+_\ell \;\perp\;
    \Bigl(\Fbar_\ell^+
      - \textstyle\sum_n \Phi_{\ell n}\,\Delta P_n\Bigr)
    &\ge 0
    &&\forall\, \ell \in \LL,
    \label{eq:app:comp:ptdfp}\\[3pt]
  0 \le \mu^-_\ell \;\perp\;
    \Bigl(\Fbar_\ell^-
      + \textstyle\sum_n \Phi_{\ell n}\,\Delta P_n\Bigr)
    &\ge 0
    &&\forall\, \ell \in \LL,
    \label{eq:app:comp:ptdfm}\\[3pt]
  0 \le \gamma_{ij} \;\perp\; x_{ij}
    &\ge 0
    &&\forall\, i \ne j.
    \label{eq:app:comp:nonneg}
\end{align}
\end{subequations}
Each condition enforces that at least one of the two
non-negative quantities equals zero.  Since the reduced LP is
feasible and has finite optimal value, these KKT conditions are
both necessary and sufficient for
optimality~\cite{bertsimas_introduction_1997}.
 
\subsection{Fortuny-Amat and McCarl Linearization}%
\label{app:linearization}
 
Each complementarity pair $0 \le a \perp b \ge 0$ encodes the
nonlinear constraint $a \cdot b = 0$ together with $a, b \ge 0$.
The Fortuny-Amat and McCarl construction~\cite{fortuny-amat_representation_1981}
replaces each pair with a binary variable $z \in \{0,1\}$ and
the linear constraints
\begin{equation}\label{eq:app:bigm}
  a \;\le\; M_a\, z,
  \qquad
  b \;\le\; M_b\,(1 - z),
  \qquad
  z \in \{0,1\},
\end{equation}
where $M_a$ and $M_b$ are finite upper bounds on $a$ and $b$ at
every retained KKT point in the bounded representation.  When
$z = 0$, the first constraint forces
$a = 0$ and the second allows $b \le M_b$; when $z = 1$, the
roles reverse.  This construction is exact for the bounded KKT
representation whenever $M_a$ and $M_b$ satisfy
Assumption~\ref{ass:bigm}.
 
Applying~\eqref{eq:app:bigm} to every complementarity pair
in~\eqref{eq:app:comp} replaces the nonlinear KKT system with a
system of linear equalities and inequalities plus binary variables.
Combined with the linear stationarity
equations~\eqref{eq:app:stationarity} and the primal/dual
feasibility constraints, this yields a single-level MILP that is
allocation-equivalent to the original CCM clearing LP. When problem-specific Big-M bounds are unavailable,
parameter-free alternatives such as SOS1
constraints~\cite{kleinert_why_2023} or solver-native indicator
constraints can replace the Fortuny-Amat and McCarl linearization;
because the bounds derived in
Appendix~\ref{app:bigm} exploit the CCM problem structure, whether SOS1 reformulations offer computational
advantages at larger network scales remains a direction for future work.

\subsection{Binary Variable Count}\label{app:bincount}
 
One binary variable per complementarity pair produces the count
stated in Proposition~\ref{prop:milp}:
\begin{equation}\label{eq:app:bincount}
  \underbrace{\card{\NN}}_{\eqref{eq:app:comp:lb}}
  \;+\;
  \underbrace{\card{\NN}}_{\eqref{eq:app:comp:ub}}
  \;+\;
  \underbrace{\card{\LL}}_{\eqref{eq:app:comp:ptdfp}}
  \;+\;
  \underbrace{\card{\LL}}_{\eqref{eq:app:comp:ptdfm}}
  \;+\;
  \underbrace{\card{\NN}(\card{\NN}-1)}_{\eqref{eq:app:comp:nonneg}}.
\end{equation}
The first two groups correspond to agent lower and upper bounds
($\card{\NN}$ each).  The next two correspond to the upper and lower
PTDF limits ($\card{\LL}$ each, totaling $2\card{\LL}$).
The last group corresponds to credit-flow nonnegativity
constraints, one for each ordered pair $(i,j)$ with $i \ne j$,
which yields $\card{\NN}(\card{\NN} - 1)$ pairs.
 
On the 3-bus network ($\card{\NN} = 6$, $\card{\LL} = 3$), this formula
gives $6 + 6 + 3 + 3 + 30 = 48$ binaries, matching the 48
reported in Table~\ref{tab:computation}.  On the IEEE 24-bus network
($\card{\NN} = 8$, $\card{\LL} = 38$), the count is
$8 + 8 + 76 + 56 = 148$.  The credit-nonnegativity binaries
dominate the count and grow quadratically in $\card{\NN}$.

\subsection{Big-M Sufficient Conditions and Heuristic Bounds}%
\label{app:bigm}
 
Assumption~\ref{ass:bigm} holds whenever the dual optimal set of
the reduced LP is bounded.  LP strong duality guarantees that the
dual is feasible and attains the same finite optimal value as the
primal (Lemma~\ref{lem:wellposed}), so at least one dual optimum
exists.  However, the dual optimal set need not be bounded. When
the primal LP is degenerate (multiple dual optima exist), the set
of all dual optima can form an unbounded face of the dual
polyhedron.  In such cases, a computational implementation can
select a finite dual representative or add a regularization term
(e.g., $\epsilon \sum \alpha_i^2$) to obtain a unique bounded
dual solution; the resulting MILP represents that bounded KKT
selection rather than the entire unbounded dual face.
 
For non-degenerate instances, the following heuristic bounds guide
implementation.
 
\textit{Primal bounds.}
Agent curtailment satisfies $0 \le c_i \le L_i$ by constraint, and served load satisfies $0 \le s_i \le L_i$. The raw credit-flow variables, however, are not bounded by flow conservation alone, because adding a directed cycle or equal counterflows $x_{ij}$ and $x_{ji}$ leaves every $c_i$, $s_i$, and PTDF flow unchanged, so the uncompactified $x$-polyhedron is generally unbounded. For the KKT/MILP reformulation, we use the compactified accounting representation $0 \le x_{ij} \le D$, which is without loss for the allocation problem because every feasible curtailment vector admits a cycle-free transport representation with each bilateral flow bounded by $D$. We therefore use $M^x_{ij} = D$ as the primal Big-M bound for credit-flow variables; this compactification selects a bounded accounting representation without removing any feasible final curtailment allocation.
PTDF flow slack satisfies
$0 \le \Fbar_\ell^{+} - \sum_n \Phi_{\ell n}\,\Delta P_n
\le \Fbar_\ell^{+} + \Fbar_\ell^{-}$ and
$0 \le \Fbar_\ell^{-} + \sum_n \Phi_{\ell n}\,\Delta P_n
\le \Fbar_\ell^{+} + \Fbar_\ell^{-}$.  These bounds supply the $M_b$ constants for
every complementarity pair in which a primal quantity appears.
 
\textit{Agent capacity duals.}
The stationarity equation~\eqref{eq:app:stationarity} constrains
$\alpha_i - \alpha_j$ to equal the bid spread plus congestion
terms minus $\gamma_{ij}$.  In typical instances, this gives
$\alpha_i, \beta_i = O(\max_j b_j + \max_\ell |\mu_\ell|)$.
 
\textit{Congestion duals.}
From~\eqref{eq:app:stationarity}, when $x_{ij} > 0$ (so that
$\gamma_{ij} = 0$ by complementarity) and the PTDF difference
$\Phi_{\ell,n(i)} - \Phi_{\ell,n(j)} \ne 0$, one can bound the
net congestion shadow price
$\sum_\ell (\mu^+_\ell - \mu^-_\ell)
(\Phi_{\ell,n(i)} - \Phi_{\ell,n(j)})$ by the bid spread
$(b_i - b_j)$ plus the agent-bound duals.  This yields
$\mu^+_\ell, \mu^-_\ell = O(\max_j b_j / \min_{i \ne j}
|\Phi_{\ell,n(i)} - \Phi_{\ell,n(j)}|)$ when the minimum PTDF
difference is bounded away from zero.
 
\textit{Co-located agents (division-by-zero case).}
Co-located pairs do occur in the 3-bus network, where DC-1 and DC-2 share Bus 1, C\&I
and Res share Bus 2, and VPP-1 and VPP-2 share Bus 3. In the Big-M construction,
congestion-dual bounds should therefore not be inferred from co-located pairs,
for which $\Phi_{\ell,n(i)}-\Phi_{\ell,n(j)}=0$ for every line $\ell$. Instead,
bounds should be computed from non-co-located active pairs, from a solved dual
LP, or from the dual-regularization approach described above.

\section{Tie-Breaking Formalism for VCG}\label{app:tiebreak}
 
Proposition~\ref{prop:vcg} invokes Nisan's Theorem~9.17, which
requires a welfare-maximizing allocation rule with a fixed,
report-independent tie-breaking rule.  This appendix clarifies
why such a rule is needed and how it connects to the MPEC
literature.
 
\textit{Why tie-breaking matters.}
The CCM clearing LP may have multiple optimal solutions when the
feasible polyhedron has a degenerate optimal face.  All LP optima
share the same objective value $\sum_i b_i s_i^*$, so the welfare
level (Corollary~\ref{cor:welfare}) is unaffected by the choice
among optima.  However, VCG Clarke-pivot payments depend on the
specific allocation $s_i^*$ returned by the clearing rule, not
only on the welfare level.  If the tie-breaking rule varies with
agents' reports, an agent could manipulate its bid to influence
which optimum the rule selects, potentially improving its VCG
payment without changing the welfare level.  A fixed,
report-independent tie-breaking rule closes this channel.
 
\textit{Lexicographic vertex selection.}
One concrete rule that satisfies this requirement is lexicographic
vertex selection, where among all optimal vertices of the LP, we select the
one that is lexicographically smallest in a pre-specified ordering
of the decision variables.  This rule is deterministic,
independent of the bid vector~$b$ (the vertex set of the feasible
polyhedron depends only on $\dbar, L, \Phi, \Fbar$), and
computable by standard LP post-processing.
 
\textit{Connection to the MPEC literature.}
The MILP reformulation (Proposition~\ref{prop:milp}) reproduces
the full LP optimal set, where its feasible points correspond exactly to
the KKT points of the reduced LP, all of which are LP
optima~\cite{bertsimas_introduction_1997}.  The MILP solver returns one such
point, but the choice among them is solver-dependent.  In the MPEC
literature~\cite{hobbs_linear_2001}, the standard convention is to assume
an optimistic bilevel selection rule (the lower level selects the
optimum most favorable to the upper-level objective).  Because all
LP optima share the same objective value, this convention imposes
no actual preference among optima.  For VCG purposes, we replace
this convention with the fixed lexicographic rule described above.
The welfare results of this paper hold under any tie-breaking
convention, since they depend only on the objective value; the
tie-breaking rule affects only the specific allocation and the
resulting VCG payments.

\end{document}